\def\BibTeX{{\rm B\kern-.05em{\sc i\kern-.025em b}\kern-.08em
    T\kern-.1667em\lower.7ex\hbox{E}\kern-.125emX}}
\renewcommand*\FXLayoutMarginClue[3]{%
  \marginpar[%
  \raggedleft\@fxuseface{margin}\textcolor{red}{\ignorespaces $ \Rightarrow $}]{%
    \raggedright\@fxuseface{margin}\textcolor{red}{\ignorespaces $ \Leftarrow $}}}
\pgfplotsset{
	discard if/.style 2 args={
        x filter/.append code={
            \edef\tempa{\thisrow{#1}}
            \edef\tempb{#2}
            \ifx\tempa\tempb
                
            \fi
        }
    },
    discard if not/.style 2 args={
        x filter/.append code={
            \edef\tempa{\thisrow{#1}}
            \edef\tempb{#2}
            \ifx\tempa\tempb
            \else
                
            \fi
        }
    }
}
\newacronym{cnn}{CNN}{convolutional neural network}
\newacronym{ula}{ULA}{uniform linear array}
\tikzset{algorithm1/.style={mark options={solid},color=TUMBeamerBlue, line width=\lineWidth, mark=square, dashed}}
\DeclareMathOperator*{\argmax}{arg\,max}
\DeclareMathOperator*{\argmin}{arg\,min}
\DeclareMathOperator{\diag}{diag}
\DeclareMathOperator{\expec}{E}
\DeclareMathOperator{\tr}{tr}
\DeclareMathOperator{\rk}{rk}
\DeclareMathOperator{\uvect}{unvec}
\DeclareMathOperator{\vect}{vec}
\newcommand{\calN}{\mathcal{N}}
\newcommand{\calO}{\mathcal{O}}
\newcommand*{\C}{\mathbb{C}}
\newcommand{\herm}{{\operatorname{H}}}
\newcommand{\tp}{{\operatorname{T}}}
\definecolor{myblue}{RGB}{30, 100, 200}
\newlength{\leftstackrelawd}
\newlength{\leftstackrelbwd}
\def\leftstackrel#1#2{\settowidth{\leftstackrelawd}%
	{${{}^{#1}}$}\settowidth{\leftstackrelbwd}{$#2$}%
	\addtolength{\leftstackrelawd}{-\leftstackrelbwd}%
	\leavevmode\ifthenelse{\lengthtest{\leftstackrelawd>0pt}}%
	{\kern-.5\leftstackrelawd}{}\mathrel{\mathop{#2}\limits^{#1}}}
\renewcommand{\stackrel}{\leftstackrel}
\newcommand{\mbA}{\bm{A}}
\newcommand{\mbB}{\bm{B}}
\newcommand{\mbC}{\bm{C}}
\newcommand{\mbD}{\bm{D}}
\newcommand{\mbH}{\bm{H}}
\newcommand{\mbN}{\bm{N}}
\newcommand{\mbP}{\bm{P}}
\newcommand{\mbR}{\bm{R}}
\newcommand{\mbS}{\bm{S}}
\newcommand{\mbT}{\bm{T}}
\newcommand{\mbU}{\bm{U}}
\newcommand{\mbV}{\bm{V}}
\newcommand{\mbW}{\bm{W}}
\newcommand{\mbY}{\bm{Y}}
\newcommand{\mba}{\bm{a}}
\newcommand{\mbd}{\bm{d}}
\newcommand{\mbh}{\bm{h}}
\newcommand{\mbn}{\bm{n}}
\newcommand{\mbv}{\bm{v}}
\newcommand{\mby}{\bm{y}}
\newcommand{\mbbeta}{{\bm{\beta}}}
\newcommand{\mbgamma}{{\bm{\gamma}}}
\newcommand{\mbdelta}{{\bm{\delta}}}
\newcommand{\mbSigma}{{\bm{\Sigma}}}
\newcommand{\mbmu}{{\bm{\mu}}}
\newcommand{\mbkappa}{{\bm{\kappa}}}
\newcommand{\mbzero}{{\bm{0}}}
\newcommand{\inv}{^{-1}}
\newcommand{\hhat}{\hat{\mbh}}
\newcommand{\covhi}{\mbC_i}
\newcommand{\covhk}{\mbC_k}
\Crefname{figure}{Fig.}{Figs.}
\pgfplotsset{compat=1.15}
\definecolor{ourdarkblue}{RGB}{30, 100, 200}
\definecolor{ourdarkgreen}{RGB}{0, 100, 0}
\definecolor{ourdarkorange}{RGB}{201, 98, 18}
\definecolor{ouryellow}{RGB}{220, 210, 50}
\definecolor{myred}{RGB}{255,69,0}
\definecolor{mylila}{RGB}{153,50,204}
\newcommand{\lineWidth}{1.0pt}
\newcommand{\marksize}{2.0pt}
\tikzset{gmmPgmm/.style={mark options={solid},color=TUMBeamerBlue, line width=\lineWidth, mark=square, mark size=\marksize}}
\tikzset{gmmPrnd/.style={mark options={solid},color=TUMBeamerGreen, line width=\lineWidth, mark=o, mark size=\marksize}}
\tikzset{gmmPdft/.style={mark options={solid},color=TUMBeamerOrange, line width=\lineWidth, mark=triangle, mark size=\marksize}}
\tikzset{glmmsePgenie/.style={mark options={solid},color=black, line width=\lineWidth, mark=x, mark size=\marksize}}
\tikzset{dnnPdnn/.style={mark options={solid},color=mylila, line width=\lineWidth, mark=star, mark size=\marksize, dash dot}}
\tikzset{ompPrnd/.style={mark options={solid},color=myred, line width=\lineWidth, mark=diamond, mark size=\marksize, dashed}}
\tikzset{ompPdft/.style={mark options={solid},color=gray, line width=\lineWidth, mark=pentagon, mark size=\marksize, dashed}}
\tikzset{lmmsePrnd/.style={mark options={solid},color=ouryellow, line width=\lineWidth, mark=asterisk, mark size=\marksize, dotted}}
\tikzset{lmmsePdft/.style={mark options={solid},color=brown, line width=\lineWidth, mark=Mercedes star, mark size=\marksize, dotted}}
\newacronym{AWGN}{AWGN}{additive white Gaussian noise}
\newacronym{BLMMSE}{BLMMSE}{Bussgang LMMSE}
\newacronym{BS}{BS}{base station}
\newacronym{CDF}{CDF}{cumulative distribution function}
\newacronym{CNN}{CNN}{convolutional neural network}
\newacronym{CSI}{CSI}{channel state information}
\newacronym{CSIT}{CSIT}{channel state information at the transmitter}
\newacronym{DFT}{DFT}{discrete Fourier transform}
\newacronym{DL}{DL}{downlink}
\newacronym{DNN}{DNN}{deep neural network}
\newacronym{DoA}{DoA}{direction of arrival}
\newacronym{EM}{EM}{expectation maximization}
\newacronym{FDD}{FDD}{frequency division duplex}
\newacronym{GMM}{GMM}{Gaussian mixture model}
\newacronym{LMMSE}{LMMSE}{linear minimum mean square error}
\newacronym{LOS}{LOS}{line of sight}
\newacronym{LS}{LS}{least squares}
\newacronym{MSE}{MSE}{mean squared error}
\newacronym{MIMO}{MIMO}{multiple-input multiple-output}
\newacronym{MPC}{MPC}{multi-path component}
\newacronym{MT}{MT}{mobile terminal}
\newacronym{NLOS}{NLOS}{non-line of sight}
\newacronym{NN}{NN}{neural network}
\newacronym{O2I}{O2I}{outdoor-to-indoor}
\newacronym{OMP}{OMP}{orthogonal matching pursuit}
\newacronym{PDF}{PDF}{probability density function}
\newacronym{PGA}{PGA}{projected gradient ascent}
\newacronym{PSD}{PSD}{power spectral density}
\newacronym{SNR}{SNR}{signal-to-noise ratio}
\newacronym{TDD}{TDD}{time division duplex}
\newacronym{UL}{UL}{uplink}
\newacronym{ULA}{ULA}{uniform linear array}
\newacronym{URA}{URA}{uniform rectangular array}
\newacronym{UMa}{UMa}{urban macrocell}
\newacronym{nSE}{nSE}{normalized spectral efficiency}
\newacronym{cCDF}{cCDF}{complementary cumulative distribution function}
\newacronym{MU-MIMO}{MU-MIMO}{multi-user MIMO}
\newacronym{MU-MISO}{MU-MISO}{multi-user MISO}
\newacronym{BD}{BD}{block diagonalization}
\newacronym{RBD}{RBD}{regularized block diagonalization}
\newacronym{RCI}{RCI}{regularized channel inversion}
\newacronym{WMMSE}{WMMSE}{weighted minimum mean square error}
\newacronym{SWMMSE}{SWMMSE}{stochastic WMMSE}
\newacronym{SVD}{SVD}{singular value decomposition}
\newacronym{SR}{SR}{sum-rate}
\newacronym{CME}{CME}{conditional mean estimator}
\newacronym{ML}{ML}{machine learning}
\newacronym{FLOPS}{FLOPS}{floating-point operations}
\newacronym{OFDM}{OFDM}{orthogonal frequency-division multiplexing}
\newacronym{LTE}{LTE}{Long Term Evolution}
\newacronym{GPS}{GPS}{Global Positioning System}
\newacronym{UMi}{UMi}{urban microcell}
\newacronym{MAP}{MAP}{maximum a posteriori}
\newacronym{3GPP}{3GPP}{3rd Generation Partnership Project}
\newacronym{MMSE}{MMSE}{minimum mean square error}
\newacronym{NMSE}{NMSE}{normalized MSE}
\newacronym{MISO}{MISO}{multiple-input single-output}
\newacronym{CMI}{CMI}{conditional mutual information}
\newacronym{KKT}{KKT}{Karush-Kuhn-Tucker}
\newacronym{RMP}{RMP}{range matching pursuit}
\newcommand{\Nrx}{N_{\mathrm{rx}}}
\newcommand{\Ntx}{N_{\mathrm{tx}}}
\newcommand{\Krx}{K_{\mathrm{rx}}}
\newcommand{\Ktx}{K_{\mathrm{tx}}}
\newtheorem{theorem}{Theorem}
\newtheorem{corollary}{Corollary}[theorem]
\newtheorem{proposition}{Proposition}[theorem]
\begin{document}

\title{A Versatile Pilot Design Scheme for FDD Systems Utilizing Gaussian Mixture Models}

\author{Nurettin~Turan,~\IEEEmembership{Graduate Student Member,~IEEE,} Benedikt~Böck,~\IEEEmembership{Graduate Student Member,~IEEE,} Benedikt~Fesl,~\IEEEmembership{Graduate Student Member,~IEEE,} Michael Joham,~\IEEEmembership{Member,~IEEE,} \\ Deniz~Gündüz,~\IEEEmembership{Fellow,~IEEE,} and Wolfgang~Utschick,~\IEEEmembership{Fellow,~IEEE}\\

\thanks{Preliminary results have been presented at ISWCS'24~\cite{TuFeBoJoUt24}.}
\thanks{The authors acknowledge the financial support by the Federal Ministry of
Education and Research of Germany in the program of ``Souver\"an. Digital.
Vernetzt.''. Joint project 6G-life, project identification number: 16KISK002. 
}
\thanks{N. Turan,  B. Böck, B. Fesl, M. Joham, and W. Utschick are with the TUM School of Computation, Information and
Technology, Technische Universität München, 80333 Munich, Germany
(e-mail: nurettin.turan@tum.de; benedikt.boeck@tum.de; benedikt.fesl@tum.de; joham@tum.de; utschick@tum.de).}
\thanks{D. Gündüz is with the Department of Electrical and Electronic Engineering, Imperial College London, London SW7 2AZ, U.K. (e-mail: d.gunduz@imperial.ac.uk).}
}

\maketitle

\begin{abstract}
In this work, we propose a \ac{GMM}-based pilot design scheme for \ac{DL} channel estimation in single- and multi-user \ac{MIMO} \ac{FDD} systems.
In an initial offline phase, the \ac{GMM} captures prior information during training, which is then utilized for pilot design.
In the single-user case, the \ac{GMM} is utilized to construct a codebook of pilot matrices and, once shared with the \ac{MT}, can be employed to determine a feedback index at the \ac{MT}.
This index selects a pilot matrix from the constructed codebook, eliminating the need for online pilot optimization.
We further establish a sum \ac{CMI}-based pilot optimization framework for \ac{MU-MIMO} systems.
Based on the established framework, we utilize the \ac{GMM} for pilot matrix design in \ac{MU-MIMO} systems.
The analytic representation of the \ac{GMM} enables the adaptation to any \ac{SNR} level and pilot configuration without re-training.
Additionally, an adaption to any number of \acp{MT} is facilitated.
Extensive simulations demonstrate the superior performance of the proposed pilot design scheme compared to state-of-the-art approaches.
The performance gains can be exploited, e.g., to deploy systems with fewer pilots.

\end{abstract}

\begin{IEEEkeywords}
Pilot design, Gaussian mixture models, machine learning, MU-MIMO, FDD systems.
\end{IEEEkeywords}

\section{Introduction}

\begin{tikzpicture}[remember picture,overlay]
\node[anchor=south,yshift=14pt,xshift=-134pt] at (current page.south) {{\parbox{\dimexpr\columnwidth-\fboxsep-\fboxrule\relax}{\footnotesize \quad \copyright This work has been submitted to the IEEE for possible publication. Copyright may be transferred without notice, after which this version may no longer be accessible.}}};
\end{tikzpicture}%

In \ac{MIMO} communication systems, obtaining \ac{CSI} at the \ac{BS} needs to occur in regular time intervals.
In \ac{FDD} systems, both the \ac{BS} and the \acp{MT} transmit at the same time but on different frequencies, which breaks the reciprocity between the instantaneous \ac{UL} \ac{CSI} and \ac{DL} \ac{CSI}.
Consequently, acquiring \ac{DL} \ac{CSI} at the \ac{BS} in \ac{FDD} systems is challenging \cite{2019massive} and, thus, relies on feedback of the estimated channels from the \acp{MT} rendering the quality of \ac{DL} channel estimation critically important.

However, the channel estimation quality is strongly influenced by the choice of pilots, demonstrating the need for sophisticated pilot design schemes in \ac{FDD} systems.
In massive \ac{MIMO} systems, where the \ac{BS} is typically equipped with a high number of antennas, as many pilots as transmit antennas are required to be sent from the \ac{BS} to the \ac{MT} to fully illuminate the channel, i.e., avoiding a systematic error when relying on \ac{LS} \ac{DL} \ac{CSI} estimation at the \ac{MT}.
However, the associated pilot overhead for complete channel illumination can be prohibitive \cite{BjLaMa16}.
In scenarios with spatial correlation at the \ac{BS} and the \acp{MT}, the \ac{DL} training overhead can be significantly reduced by leveraging statistical knowledge of the channel and the noise \cite{ChLoBi14, KoSa04, BjOt10, PaLiZhLu07, FaLiLiGa17, JiMoCaNi15, BaXu17, GuZh19}, e.g., by using Bayesian estimation approaches in combination with well-designed pilot matrices according to different optimization criteria.

In particular, in \cite{ChLoBi14, KoSa04, PaLiZhLu07, BjOt10}, the single-user case is analyzed, where \cite{ChLoBi14} considers a \ac{MISO} system and in \cite{KoSa04, PaLiZhLu07, BjOt10}, a \ac{MIMO} setup with a Kronecker correlation model is investigated.
These works consider the minimization of the \ac{MSE} as the optimization criterion for \ac{DL} channel estimation.
A common approach is to send the eigenvectors of the transmit-side covariance matrix corresponding to the strongest eigenvalues as pilots.
Further, considering different power levels allows to adaptively scale these pilot vectors following a water-filling procedure. 
So-called unitary pilots, i.e., assigning equal power to each pilot vector, can also be used for practical reasons \cite{ChLoBi14}.
In \cite{FaLiLiGa17, BaXu17, JiMoCaNi15, GuZh19}, \ac{MU-MISO} setups, are considered. 
The work \cite{FaLiLiGa17} considers minimizing the sum of \acp{MSE} associated with the channel estimate of all \acp{MT}, whereas \cite{BaXu17} considers minimizing the weighted sum \ac{MSE}. 
Using the close relationship between the mutual information and the \ac{MSE} \cite{DoShVe05}, the work \cite{JiMoCaNi15} considers maximizing the sum \ac{CMI} expression to design the pilot matrix.
In \cite{GuZh19}, the channels are assumed to be distributed according to a Gaussian mixture distribution, and mutual information maximization is adopted.
The works \cite{BaXu17, JiMoCaNi15, GuZh19} apply iterative algorithms for designing the pilot matrices.
However, all of the aforementioned works rely on either perfect or estimated statistical knowledge at the \ac{BS} and/or at the \acp{MT}' side, which may be difficult to acquire.

In this context, the recent works in \cite{MaGa20, MaGü21} propose machine learning-based frameworks for combining the pilot matrix design with \ac{DL} channel estimation.
In particular, end-to-end \acp{DNN} are employed to jointly learn a pilot matrix and a channel estimation module. 
In this way, the designed pilots are optimized for the whole scenario, i.e., a global pilot matrix is learned offline and remains fixed for channel estimation in the online phase.
However, these end-to-end \acp{DNN} cannot provide \ac{MT}-adaptive pilot matrices, neither for the single-user case nor the multi-user case, since knowledge about \ac{MT}-specific statistics, which could be inferred in the online phase, can not be leveraged.
In other words, with the above-mentioned end-to-end \acp{DNN}, pilot matrices tailored for a specific \ac{MT} or a constellation of multiple \acp{MT} are not possible.
Moreover, end-to-end \acp{DNN} are inflexible regarding the number of pilots, since different numbers of pilots require a pilot matrix of suitable dimension and a dedicated channel estimation module, i.e., several end-to-end \acp{DNN} are required. 
The \ac{SNR} level-specific training further hinders practicability.

In recent years, generative models have been utilized for wireless communications entities.
Generally, generative models refer to techniques that aim to learn the underlying distribution of a training data set and provide a prior for different applications.
Generative concepts such as generative adversarial networks \cite{GoPoMiXuWaOzCoBe14}, variational autoencoders \cite{DiMa14}, and diffusion models \cite{HoJaAb20} gained significant attention.
In wireless communications, these generative models have been applied to channel estimation~\cite{BaDoJaDiAn21}, to precoding~\cite{LiLiChLeShLu21}, and as channel modeling frameworks, e.g., \cite{YeLiLiJu20, KiFrSc23}.
In this work, we utilize \acp{GMM}, which are widely adopted in wireless communications research.
For instance, \acp{GMM} are employed for predicting channel states in \cite{MuMiDe20}, for multi-path clustering in \cite{LiZhMaZh20}, and for pilot optimization in \cite{GuZhYi19}.
Note that a fundamental difference of our work to \cite{GuZhYi19} is that we do not assume that the true channel \ac{PDF} is given and is equal to a Gaussian mixture distribution.
More recently, \acp{GMM} were used for channel estimation in \cite{KoFeTuUt21J}, for channel prediction in scenarios with high mobility in \cite{TuBoChFeBuJoFeUt24}, and were leveraged to design a feedback scheme for precoding in single- and multi-user systems in \cite{TuFeKoJoUt23}.
The universal approximation ability of \acp{GMM}, cf. \cite{NgNgChMc20}, justifies the strong performance in the above-mentioned applications.
Beyond this ability, the primary motivation for leveraging \acp{GMM} in this work to propose a versatile pilot design scheme for point-to-point \ac{MIMO} and \ac{MU-MIMO} \ac{FDD} systems is the following.
\acp{GMM} are generative models that comprise a discrete latent space.
This characteristic makes the inference of the latent variable, given an observation, particularly tractable.

The contributions of this work are summarized as follows:
\begin{enumerate}
    \item 
    The proposed \ac{GMM}-based pilot design scheme neither requires \emph{a priori} knowledge of the channel's statistics at the \ac{BS} nor the \ac{MT}.
    The statistical prior information captured by the \ac{GMM} in the initial offline phase is exploited to determine a feedback index at the \ac{MT}, encoded by $B$ bits.
    This feedback index is sufficient to establish common knowledge of the pilot matrix, which is selected from a codebook of pilot matrices.
    Thus, no online pilot optimization is required in the single-user case.

    \item The sum \ac{CMI}-based multi-user pilot optimization framework from \cite{JiMoCaNi15} only considers single antenna \acp{MT}.
    However, with trends towards massive \ac{MIMO}, both the \ac{BS} and the \acp{MT} are equipped with many antennas \cite{AnBuChHaLoSoZh14}.
    In this work, we extend the sum \ac{CMI} maximization to \ac{MU-MIMO} systems, i.e., the \acp{MT} are also equipped with multiple antennas.
    Additionally, we establish a lower bound to the sum \ac{CMI} in a \ac{MU-MIMO} system and discuss conditions when the lower bound is equal to the sum \ac{CMI} or exhibits the largest gap to the sum \ac{CMI}.
    We further recognize that the maximization of the lower bound involves the usage of the iterative algorithm from \cite{JiMoCaNi15} and provides complexity savings in the online pilot matrix optimization.

    \item Based on the extension of the sum \ac{CMI} framework, we utilize the \ac{GMM}-based scheme for pilot matrix design in \ac{MU-MIMO} systems.
    Again, no \emph{a priori} statistical knowledge is required.
    In the multi-user case, each \ac{MT} determines its $B$ bits feedback information and transfers it back to the \ac{BS}.
    Using a feedforward link, the \ac{BS} broadcasts the collected feedback indices to all \acp{MT}.
    The pilot matrix is then found using the respective \ac{GMM}-component’s covariance. 
    Due to the quantized information associated with the component selection of the \ac{GMM}, the proposed scheme offers significant gains over state-of-the-art feedforward signaling schemes, which quantizes the pilot matrix after optimization.

    \item The analytic representation of the \ac{GMM} generally allows the adaption to any \ac{SNR} level and pilot configuration without re-training.
    In the multi-user case, an adaption to any number of \acp{MT} is facilitated.
    Moreover, the \ac{GMM}'s so-called responsibilities that are needed for inferring the feedback index can be further processed to a channel estimate at the \acp{MT}.
    Based on extensive simulations, we highlight the superior performance of the proposed scheme compared to state-of-the-art approaches.
    The gains can be exploited, e.g., to deploy systems with fewer pilots.
\end{enumerate}

\emph{Notation:}
Matrices and vectors are denoted with bold uppercase and bold lowercase letters, respectively.
The transpose, conjugate, and conjugate transpose of a matrix $\mbA$ is denoted by $\mbA^\tp$, $\mbA^*$, and $\mbA^\herm$, respectively.
The all-zeros vector and the identity matrix with appropriate dimensions are denoted by $\mathbf{0}$ or $\mathbf{I}$, respectively.
The canonical unit vector that contains a one at the $i$-th entry and is zero elsewhere is denoted by $\mathbf{e}_i$.
The Euclidean norm of a vector \( \mba \in \C^N \) is denoted by \( \| \mba \| \).
A complex-valued normal distribution with mean vector \( \mbmu \) and covariance matrix \( \mbC \) is denoted by \( \calN_{\C}(\mbmu, \mbC) \) and~$\sim$~stands for ``distributed as.''
The determinant, the rank, and the trace of matrix $\mbA$ is given by $\det(\mbA)$, $\rk(\mbA)$, and $\tr(\mbA)$, respectively.
The vectorization (stacking columns) of a matrix \( \mbA \in \C^{m\times N} \) is written as \( \mba=\vect(\mbA) \in \C^{mN} \), and the reverse operation is denoted by \(\mbA = \uvect_{m,N}(\mba)\).
The vector of the diagonal elements of a diagonal matrix $\mbA$ is denoted by $\diag(\mbA)$, and the diagonal matrix with a vector $\mba$ on its diagonal is denoted by $\diag(\mba)$.
The Kronecker product of two matrices \( \mbA \in \C^{m_1\times N_1} \) and \( \mbB \in \C^{m_2\times N_2} \) is \( \mbA \otimes \mbB \in \C^{m_1m_2\times N_1N_2} \).

\section{System and Channel Model}
\label{sec:system_channel_model}

The system consists of a \ac{BS} equipped with $\Ntx$ antennas and $J$ \acp{MT}.
Each \ac{MT} $j \in \mathcal{J} = \{1,2,\dots,J\}$ is equipped with $\Nrx$ antennas.
We assume a block-fading model, cf., e.g., \cite{NeWiUt18}, where the \ac{DL} signal for block $t$ received at each \ac{MT} $j \in \mathcal{J}$ is
\begin{equation} \label{eq:noisy_obs}
    \mbY_{j,t} = \mbH_{j,t} \mbP^\tp_t + \mbN_{j,t}
\end{equation}
where $t=0,\dots, T$, with the \ac{MIMO} channel $\mbH_{j,t} \in \C^{\Nrx \times \Ntx}$, the pilot matrix $\mbP_t  \in \C^{n_\mathrm{p} \times \Ntx}$, and the \ac{AWGN} $\mbN_{j,t} = [\mbn^{\prime}_{j,t,1}, \dots, \mbn^{\prime}_{j,t,n_\mathrm{p}}] \in \C^{\Nrx \times n_\mathrm{p}}$ with $\mbn^\prime_{j,t,p} \sim \mathcal{N}_\C(\bm{0}, \sigma_n^2 \mathbf{I}_{\Nrx})$ for $p \in \{1,2, \dots, n_\mathrm{p}\}$, and $n_\mathrm{p}$ is the number of pilots.
We consider systems with reduced pilot overhead, i.e., $n_\mathrm{p} < \Ntx$.
For the subsequent analysis, it is advantageous to vectorize~\eqref{eq:noisy_obs}:
\begin{equation} \label{eq:noisy_obs_vec}
    \mby_{j,t} = (\mbP_t \otimes \mathbf{I}_{\Nrx}) \mbh_{j,t} + \mbn_{j,t}
\end{equation}
where \( \mbh_{j,t} = \vect(\mbH_{j,t}) \), \( \mby_{j,t} = \vect(\mbY_{j,t}) \), \( \mbn_{j,t} = \vect(\mbN_{j,t}) \), and $\mbn_{j,t} \sim \mathcal{N}_\C(\mathbf{0}, \mbSigma)$ with $\mbSigma = \sigma_n^2 \mathbf{I}_{\Nrx n_\mathrm{p}}$.

We adopt the \ac{3GPP} spatial channel model~(see \cite{3GPP_mimo,NeWiUt18}) where channels are modeled conditionally Gaussian, i.e., \( \mbh_{j,t} | \mbdelta_j \sim \calN_\C(\mbzero, \mbC_{\mbdelta_j}) \).
The covariance matrix of each \ac{MT} $ \mbC_{\mbdelta_j}$ is assumed to remain constant over $T+1$ blocks.
The random vectors \( \mbdelta_j \) comprise the main angles of arrival/departure of the multi-path propagation cluster between the \ac{BS} and \ac{MT} $j$.
The main angles of arrival/departure are drawn independently and are uniformly distributed over \( [\frac{-\pi}{2}, \frac{\pi}{2}] \).
At the \ac{BS} as well as the \acp{MT} \acp{ULA} are deployed such that the transmit- and receive-side spatial channel covariance matrices are given by
\begin{equation} \label{eq:channel_cov}
    \mbC_{\mbdelta_j}^{\mathrm{\{rx,tx\}}} = \int_{-\pi}^\pi g^{\mathrm{\{rx,tx\}}}(\theta; \mbdelta_j) \mba^{\mathrm{\{rx,tx\}}}(\theta) \mba^{\mathrm{\{rx,tx\},\herm}}(\theta) \mathrm{d}\theta,
\end{equation}
where $\mba^{\mathrm{\{rx,tx\}}}(\theta) = [1, \operatorname{e}^{j\pi\sin(\theta)}, \dots, 
\operatorname{e}^{j\pi(N_{\mathrm{\{rx,tx\}}}-1)\sin(\theta)}]^\tp$ is the array steering vector for an angle of arrival/departure \( \theta \) and \( g^{\mathrm{\{rx,tx\}}}\) is a Laplacian power density whose standard deviation describes the angular spread $\sigma_\text{AS}^\mathrm{\{rx,tx\}}$ of the propagation cluster at the \ac{BS} ($\sigma_\text{AS}^\mathrm{tx}=\SI{2}{\degree}$) and \ac{MT} $j$ ($\sigma_\text{AS}^\mathrm{rx}=\SI{35}{\degree}$) side~\cite{3GPP_mimo}.
The overall channel covariance matrix per \ac{MT} $j$ is constructed as \( \mbC_{\mbdelta_j} = \mbC_{\mbdelta_j}^{\mathrm{tx}} \otimes \mbC_{\mbdelta_j}^{\mathrm{rx}} \) due to the assumption of independent scattering in the vicinity of transmitter and receiver, see, e.g.,~\cite{KeScPeMoFr02}.
In the case of \acp{MT} equipped with a single antenna, \( \mbC_{\mbdelta_j} \) degenerates to the transmit-side covariance matrix \( \mbC_{\mbdelta_j}^{\mathrm{tx}} \).
Due to the low angular spread at the \ac{BS} side, the covariance matrices exhibit a low numerical rank.
The channel model described above was similarly employed in \cite{BaXu17, JiMoCaNi15}, where $\mbdelta_j$ was assumed to be given for all $j \in \mathcal{J}$.

With
\begin{equation} \label{eq:H_dataset}
     \mathcal{H} = \{ \mbh^{(m)} \}_{m=1}^{M},
\end{equation}
we denote the training data set consisting of $M$ channel samples, following \cite{KoFeTuUt21J, FeTuBoUt23}.
For every channel sample $\mbh^{(m)}$, we first generate random angles, collected in \( \mbdelta^{(m)} \), and then draw the sample as \( \mbh^{(m)} \sim \calN_\C(\mbzero, \mbC_{\mbdelta^{(m)}}) \).
The resulting data set represents a wireless communication environment with unknown \ac{PDF} $f_{\mbh}$. 
Additionally, the channel of any \ac{MT} located anywhere within the \ac{BS}'s coverage area can be interpreted as a realization of a random variable with this \ac{PDF} $f_{\mbh}$.

Alternatively, environment-specific training data can be acquired, for instance, from a measurement campaign \cite{TuFeGrKoUt22, TuFeJoMaSoShXiUt24} or by using a ray-tracing tool~\cite{Al19}. 
The analysis of these different training data sources is out of the scope of this work.

\section{Point-to-Point MIMO System} \label{sec:p2pmimo}

\subsection{Pilot Optimization with Perfect Statistical Knowledge} \label{sec:pilot_opt_stat}

In the case of a point-to-point MIMO system, we drop the index $j$ in \eqref{eq:noisy_obs_vec} for notational convenience, yielding
\begin{equation} \label{eq:noisy_obs_vec_p2p}
    \mby_{t} = (\mbP_t \otimes \mathbf{I}_{\Nrx}) \mbh_{t} + \mbn_{t}.
\end{equation}

Given genie knowledge of $\mbdelta$, which fully characterizes the channel statistically (cf. \eqref{eq:channel_cov}), the observation $\mby_t$ is jointly Gaussian with the channel $\mbh_t$.
Thus, we can compute a genie \ac{LMMSE} channel estimate via~\cite{NeWiUt18}
\begin{align}\label{eq:genie_lmmse}
    &\hhat_{t, \text{gLMMSE}} = \expec[\mbh_t \mid \mby_t, \mbdelta] \\
    &= \mbC_{\mbdelta} (\mbP_t \otimes \mathbf{I}_{\Nrx})^\herm ((\mbP_t \otimes \mathbf{I}_{\Nrx}) \mbC_{\mbdelta} (\mbP_t \otimes \mathbf{I}_{\Nrx})^\herm + \mbSigma)^{-1} \mby_t.
\end{align}

The goal of pilot optimization is to design the pilot matrix $\mbP_t$ such that the \ac{MSE} between $\hhat_{t, \text{gLMMSE}}$ and the actual channel $\mbh_t$ is minimized~\cite{FaLiLiGa17, PaLiZhLu07, ChLoBi14}:
\begin{equation} \label{eq:p2p_opt}
    \mbP_t^\star = \argmin_{\mbP_t} \ \expec[\|\hhat_{t, \text{gLMMSE}} - \mbh_t\|^2]
\end{equation}
where the pilot matrix $\mbP_t$ typically satisfies either a total power constraint as in \cite{FaLiLiGa17, PaLiZhLu07} or an equal power per pilot vector constraint as in \cite{ChLoBi14}.
In this work, we consider the latter case.
For a given $\mbdelta$, the optimal pilot matrix  $\mbP_t^\star$ for every block is the same, i.e., $\mbP^\star_t = \mbP_\text{genie}$ for all $t=0,\dots, T$.
In particular, $\mbP_\text{genie}$ is a sub-unitary matrix~\cite{ChLoBi14}
\begin{equation} \label{eq:p_genie}
    \mbP_\text{genie} = \sqrt{\rho} \mbU_\mbdelta^\herm[{:}n_\mathrm{p},:],
\end{equation}
which is composed of the $n_\mathrm{p}$ dominant eigenvectors of the transmit-side covariance matrix \( \mbC_{\mbdelta}^{\mathrm{tx}} = \mbU_\mbdelta \bm{\Lambda}_\mbdelta \mbU_\mbdelta^\herm \) corresponding to the $n_\mathrm{p}$ largest eigenvalues, where $\rho$ denotes the transmit power per pilot vector.

Note that power loading across pilot vectors generally performs better but requires additional processing.
Additionally, with a sub-unitary pilot design, our proposed scheme yields a codebook consisting of pilot matrices that do not depend on the \ac{SNR}, resolving the burden of saving \ac{SNR} level-specific pilot codebooks, see \Cref{sec:pilot_design_gmm}.

\subsection{GMM-based Pilot Design and Downlink Channel Estimation} \label{sec:gmm_based_scheme}

Any channel $\mbh_t$ of a \ac{MT} located anywhere within the \ac{BS}'s coverage area can be interpreted as a realization of a random variable with \ac{PDF} $f_{\mbh}$ for which, however, no analytical expression is available.
To overcome this issue, we utilize a \ac{GMM} to approximate the \ac{PDF} $f_{\mbh}$, following \cite{KoFeTuUt21J, TuFeKoJoUt23}.
This learned model is then shared between the \ac{BS} and the \ac{MT} to establish common awareness of the channel characteristics.
The \ac{GMM} is afterward used to infer feedback information for pilot matrix design and for \ac{DL} channel estimation at the \ac{MT} in the online phase. 
Thereby, the feedback information of the \ac{MT} of a preceding fading block $t-1$ is leveraged at the \ac{BS} to select the pilot matrix for the subsequent fading block $t>0$.

\subsubsection{Modeling the Channel Characteristics at the BS -- Offline} \label{sec:modeling_gmm}

The channel characteristics are captured offline using a \ac{GMM} comprised of $K=2^B$ components,
\begin{equation}\label{eq:gmm_of_h}
    f^{(K)}_{\mbh}(\mbh_t) = \sum\nolimits_{k=1}^K \pi_k \calN_{\C}(\mbh_t; \mbmu_k, \mbC_k)
\end{equation}
where each component of the \ac{GMM} is defined by the mixing coefficient $\pi_k$, the mean $\mbmu_k$, and the covariance matrix $\mbC_k$.

Motivated by the observation that the channel exhibits an unconditioned zero mean and similar to \cite{FeTuBoUt23}, we enforce the means of the \ac{GMM}-components to zero, i.e., $\mbmu_k=\bm{0}$ for all $k \in \{1,\dots, K\}$.
In \cite{BoBaTuSe24}, a theoretical justification for this regularization is provided.
This regularization also reduces the number of learnable parameters and, thus, prevents overfitting.
Note that the parameters of the \ac{GMM}, i.e., $\{\pi_k, \mbC_k\}_{k=1}^K$, remain constant across all blocks.
To obtain maximum likelihood estimates of the \ac{GMM} parameters, we utilize the training data set \(\mathcal{H} \) [see \eqref{eq:H_dataset}] and employ an \ac{EM} algorithm, as described in~\cite[Subsec.~9.2.2]{bookBi06}, where we enforce the means to zero in every M-step of the \ac{EM} algorithm.

For \ac{MIMO} channels, we further impose a Kronecker factorization on the covariances of the \ac{GMM}, i.e., \( \covhk = \mbC^{\mathrm{tx}}_k \otimes \mbC^{\mathrm{rx}}_k \).
Thus, instead of fitting an unconstrained \ac{GMM} with $N\times N$-dimensional covariances (where $N=\Ntx\Nrx$), we fit separate \acp{GMM} for the transmit and receive sides. 
These transmit-side and receive-side \acp{GMM} possess $\Ntx \times \Ntx$ and $\Nrx \times \Nrx$-dimensional covariances, respectively, with $\Ktx$ and $\Krx$ components. 
Then, by computing the Kronecker products of the corresponding transmit- and receive-side covariance matrices, we obtain a \ac{GMM} with \( K = \Ktx\Krx \) components and $N\times N$-dimensional covariances.
Imposing this constraint on the \ac{GMM} covariances significantly decreases the duration of offline training, facilitates parallelization of the fitting process, and demands fewer training samples due to the reduced number of parameters to be learned, cf. \cite{TuFeKoJoUt23, KoFeTuUt21J}.
In addition, having access to a transmit-side covariance during pilot design in the online phase, as discussed in \Cref{sec:pilot_design_gmm}, is ensured.

Using a \ac{GMM}, we can calculate the posterior probability that the channel $\mbh_t$ stems from component $k$ as~\cite[Sec.~9.2]{bookBi06},
\begin{equation}\label{eq:responsibilities_h}
    p(k \mid \mbh_t) = \frac{\pi_k \calN_{\C}(\mbh_t; \bm{0}, \mbC_k)}{\sum_{i=1}^K \pi_i \calN_{\C}(\mbh_t; \bm{0}, \mbC_i) }.
\end{equation}
These posterior probabilities are commonly referred to as responsibilities.


\subsubsection{Sharing the Model with the MTs -- Offline}

A \ac{MT} requires access to the \ac{GMM} parameters to infer the feedback information.
Conceptually, this involves sharing the parameters $\{\pi_k, \mbC_k\}_{k=1}^K$, with the \acp{MT} upon entering the coverage area of the \ac{BS}.
This transfer is required only once since the \ac{GMM} remains unchanged for a specific \ac{BS} environment.

Incorporating model-based insights to restrict the \ac{GMM} covariances, as discussed in \Cref{sec:modeling_gmm}, additionally significantly reduces the model transfer overhead.
Due to specific antenna array geometries, the \ac{GMM} covariances can be further constrained to a Toeplitz or block-Toeplitz matrix with Toeplitz blocks, in case of a \ac{ULA} or \ac{URA}, respectively, cf.~\cite{FeJoHuKoTuUt22, TuFeUt23, BoBaTuSe24}, with even fewer parameters.
In \cite{TuFeUt23}, it is further discussed how \acp{GMM} with variable bit lengths can be obtained.
However, these further structural constraints and the analysis with variable bit lengths are out of the scope of this work.

\subsubsection{Inferring the Feedback Information and Estimating the Channel at the MTs -- Online}

In the online phase, the \ac{MT} infers feedback information given the observation $\mby_t$ utilizing the \ac{GMM}.
The joint Gaussian nature of each \ac{GMM} component [see \eqref{eq:gmm_of_h}] combined with the \ac{AWGN}, allows for simple computation of the \ac{GMM} of the observations with the \ac{GMM} from \eqref{eq:gmm_of_h} as
\begin{equation}\label{eq:gmm_y}
\small
    f_{\mby}^{(K)}(\mby_t) = \sum\nolimits_{k=1}^K \pi_k \calN_{\C}(\mby_t; \bm{0}, (\mbP_t \otimes \mathbf{I}_{\Nrx}) \covhk (\mbP_t \otimes \mathbf{I}_{\Nrx})^\herm + \mbSigma).
\end{equation}

Thus, the \ac{MT} can compute the responsibilities based on the observations $\mby_t$ as 
\begin{equation}\label{eq:responsibilities}
\small
    p(k \mid \mby_t) = \frac{\pi_k \calN_{\C}(\mby_t;\bm{0}, (\mbP_t \otimes \mathbf{I}_{\Nrx}) \covhk (\mbP_t \otimes \mathbf{I}_{\Nrx})^\herm + \mbSigma)}{\sum_{i=1}^K \pi_i \calN_{\C}(\mby_t; \bm{0}, (\mbP_t \otimes \mathbf{I}_{\Nrx}) \covhi (\mbP_t \otimes \mathbf{I}_{\Nrx})^\herm + \mbSigma) }.
\end{equation}
The feedback information $k_t^\star$ is then determined through a \ac{MAP} estimation (cf. \cite{TuFeKoJoUt23}) as
\begin{equation} \label{eq:ecsi_index_j}
    k^\star_t = \argmax_{k } ~{p(k \mid \mby_t)}
\end{equation}
where the index of the component with the highest responsibility for the observation $\mby_t$ serves as the corresponding feedback information.
Hence, the feedback information is simply the index of the \ac{GMM} component that best explains the underlying channel $\mbh_t$ for a given observation $\mby_t$.
Subsequently, the responsibilities are utilized to obtain a channel estimate via the \ac{GMM} by calculating a convex combination of per-component \ac{LMMSE} estimates, as discussed in \cite{TuFeKoJoUt23, KoFeTuUt21J}.
In particular, the \ac{MT} estimates the channel by computing
\begin{equation}\label{eq:gmm_estimator_closed_form}
    \hhat_{t,\text{GMM}}(\mby_t) = \sum\nolimits_{k=1}^K p(k \mid \mby_t) \hhat_{t,\text{LMMSE},k}(\mby_t),
\end{equation}
using the responsibilities $p(k \mid \mby_t)$ from \eqref{eq:responsibilities} and
\begin{align}\label{eq:lmmse_formula}
\small
    &\hhat_{t,\text{LMMSE},k}(\mby_t) \nonumber\\
    &= \covhk (\mbP_t \otimes \mathbf{I}_{\Nrx})^\herm ((\mbP_t \otimes \mathbf{I}_{\Nrx}) \covhk (\mbP_t \otimes \mathbf{I}_{\Nrx})^\herm + \mbSigma)^{-1}\mby_t .
\end{align}
Note that parallelization concerning the number of components $K$ is possible during the feedback inference and the application of the \ac{LMMSE} filters.
Although the \ac{GMM} estimator is a natural choice due to the usage of the responsibilities, which are anyway computed to determine the feedback information, in principle, any other channel estimator can be used.


\subsubsection{Designing the Pilots at the \ac{BS}  -- Online} \label{sec:pilot_design_gmm}

Consider the eigenvalue decomposition of each of the \ac{GMM}'s transmit-side covariances, i.e., \( \mbC_{k}^{\mathrm{tx}} = \mbU_{k} \bm{\Lambda}_{k} \mbU^\herm_{k} \).
For $t>0$, given the feedback information $k_{t-1}^\star$ of the \ac{MT} from the preceding block $t-1$, we propose employing the pilot matrix $\mbP_t$ at the \ac{BS} for the subsequent block $t$ as (cf. \eqref{eq:p_genie})
\begin{equation}
    \mbP_t = \sqrt{\rho} \mbU^\herm_{k^\star_{t-1}}[{:}n_\mathrm{p},:],
\end{equation}
i.e., the $n_\mathrm{p}$ dominant eigenvectors of the $k^\star_{t-1}$-th transmit-side covariance matrix \( \mbC_{k^\star_{t-1}}^{\mathrm{tx}} \) are selected as the pilot matrix.
Since the \ac{GMM}-covariances remain fixed, we can store a set of pilot matrices $\mathcal{P} = \{\mbU^\herm_{k}[{:}n_\mathrm{p},:]\}_{k=1}^K$, and the online pilot design utilizing the proposed \ac{GMM}-based scheme simplifies to a simple selection task based on the feedback information $k^\star_{t-1}$ from the previous block.
For the initial block $t=0$, we employ a \ac{DFT}-based pilot matrix.

\section{Multi-User MIMO System} \label{sec:mumimo}

\subsection{Pilot Optimization with Perfect Statistical Knowledge} \label{sec:mimimo_opt_perfect_stat}

Two well-established possibilities for designing pilot sequences for multi-user setups with single-antenna \acp{MT} were introduced in \cite{JiMoCaNi15, BaXu17}.
The method proposed in \cite{JiMoCaNi15} utilizes the close relationship between the mutual information and the \ac{MSE} \cite{DoShVe05} to design pilot sequences that maximize the sum \ac{CMI} expression subject to a transmit power budget.
Accordingly, the sum mutual information between each \ac{MT}'s channel $\mbh_{j,t}$ and the observations $\mby_{j,t}$ conditioned on the transmitted pilot matrix $\mbP_t$ is considered.
We refer to \cite{JiMoCaNi15} for systems where the \acp{MT} are equipped with a single antenna.
In the following, we extend the framework from \cite{JiMoCaNi15} to systems where the \acp{MT} are equipped with multiple antennas, i.e., to \ac{MU-MIMO} systems.

The sum \ac{CMI} maximization problem for \ac{MU-MIMO} systems for a given number of pilots $n_\mathrm{p}$ and transmit power budget $\rho$ is expressed as
\begin{align} \label{eq:sumcmi_maximization}
    & \max_{\mbP_t} ~\sum_{j=1}^J \log\det \left( \mathbf{I} + \tfrac{1}{\sigma_n^2} (\mbP_t \otimes \mathbf{I}_{\Nrx}) \mbC_{\mbdelta_j} (\mbP_t \otimes \mathbf{I}_{\Nrx})^\herm \right) \\
    & \quad \qquad \text{s.t.} \quad \operatorname{tr}\left(\mbP_t \mbP_t^\herm\right) \leq \rho n_\mathrm{p}. \nonumber
\end{align}
For given $\{\mbdelta_j\}_{j=1}^J$, the optimal pilot matrix  $\mbP_t^\star$ for every block is the same, i.e., $\mbP^\star_t = \mbP_\text{genie}$ for all $t=0,\dots, T$, and we establish the following theorem.
\begin{theorem} \label{thm:optimal_sequnces}
The pilot matrix $\mbP^\star_t$ that maximizes the sum \ac{CMI} in a \ac{MU-MIMO} system provided in \eqref{eq:sumcmi_maximization} satisfies the condition
\begin{equation} \label{eq:sum_cmi_mumimo}
    \sum_{j=1}^J \sum_{i=1}^I \tfrac{1}{\sigma_n^2} \mbV_j \mbD_{\mbbeta_{j,i}} \mbV_j^\herm \mbP^\star_t \mbC_{\mbdelta_j}^{\mathrm{tx}} \tr\left( \mbD_{\mbgamma_{j,i}} \mbT_j\right) = \lambda \mbP^\star_t,
\end{equation}
with the \acp{SVD} for all $j \in \mathcal{J}$,
\begin{align}
    \mbP^\star_t \mbC_{\mbdelta_j}^{\mathrm{tx}} \mbP_t^{\star,\herm} &= \mbV_j \mbS_j \mbV_j^\herm, \label{eq:svd_pctx}\\
    \mbC_{\mbdelta_j}^{\mathrm{rx}} &= \mbW_j \mbT_j \mbW_j^\herm. \label{eq:svd_crx}
\end{align}
Further, let $\sum_{i=1}^{I} \alpha_{j,i} \mbgamma_{j,i} \mbbeta_{j,i}^\tp$ with $I = \min(n_\mathrm{p}, \Nrx)$, be the \ac{SVD} of the matrix $ \uvect_{\Nrx, n_\mathrm{p}} \left( \diag \left( (\mathbf{I} + \tfrac{1}{\sigma_n^2} \mbS_j \otimes \mbT_j )\inv \right) \right)$.
Then, the diagonal matrices $\mbD_{\mbbeta_{j,i}}\in \mathbb{R}^{n_\mathrm{p} \times n_\mathrm{p}}$ and $\mbD_{\mbgamma_{j,i}} \in \mathbb{R}^{\Nrx \times \Nrx}$ for all $j \in \mathcal{J}$, in \eqref{eq:sum_cmi_mumimo}, are defined as 
\begin{align}
    \mbD_{\mbbeta_{j,i}} &= \alpha_{j,i} \diag(\mbbeta_{j,i}), \\
    \mbD_{\mbgamma_{j,i}} &= \diag(\mbgamma_{j,i}),
\end{align}
and $\lambda \geq 0$ is a constant chosen to satisfy the power constraint.

\end{theorem}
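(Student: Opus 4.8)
The plan is to treat \eqref{eq:sumcmi_maximization} as a smooth optimization problem with a single power constraint and to derive its first-order (KKT) stationarity condition via matrix differentials and Wirtinger calculus; the claimed identity \eqref{eq:sum_cmi_mumimo} is exactly this stationarity condition evaluated at $\mbP^\star_t$. Since $\mbP_t\mapsto\mbP_t\mbC_{\mbdelta_j}^{\mathrm{tx}}\mbP_t^\herm$ is quadratic, the objective is not concave in $\mbP_t$, so I only target the \emph{necessary} condition that every maximizer must satisfy, which is precisely what the theorem asserts. I would also note that the $\log\det$ objective is increasing in the pilot power, so the constraint is active at the optimum and may be treated as an equality with multiplier $\lambda\geq0$.

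First I would collapse the Kronecker structure. Using the mixed-product rule, $(\mbP_t\otimes\mathbf{I}_{\Nrx})(\mbC_{\mbdelta_j}^{\mathrm{tx}}\otimes\mbC_{\mbdelta_j}^{\mathrm{rx}})(\mbP_t\otimes\mathbf{I}_{\Nrx})^\herm=(\mbP_t\mbC_{\mbdelta_j}^{\mathrm{tx}}\mbP_t^\herm)\otimes\mbC_{\mbdelta_j}^{\mathrm{rx}}$, so the $j$-th summand is $g_j(\mbP_t)=\log\det(\mbM_j)$ with $\mbM_j=\mathbf{I}+\tfrac{1}{\sigma_n^2}(\mbP_t\mbC_{\mbdelta_j}^{\mathrm{tx}}\mbP_t^\herm)\otimes\mbC_{\mbdelta_j}^{\mathrm{rx}}$. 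Inserting the eigendecompositions \eqref{eq:svd_pctx} and \eqref{eq:svd_crx} (both coincide with the stated SVDs since both matrices are Hermitian PSD) diagonalizes $\mbM_j$ through the unitary $\mbV_j\otimes\mbW_j$, giving $\mbM_j^{-1}=(\mbV_j\otimes\mbW_j)(\mathbf{I}+\tfrac{1}{\sigma_n^2}\mbS_j\otimes\mbT_j)^{-1}(\mbV_j\otimes\mbW_j)^\herm$.

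The key manipulation, which I expect to be the main obstacle, is rewriting the diagonal inverse as a short sum of Kronecker factors so the receive side can be traced out. I would invoke the identity $\vect(\mbgamma_{j,i}\mbbeta_{j,i}^\tp)=\mbbeta_{j,i}\otimes\mbgamma_{j,i}$: applying $\uvect_{\Nrx,n_\mathrm{p}}$ to the diagonal vector $\diag((\mathbf{I}+\tfrac{1}{\sigma_n^2}\mbS_j\otimes\mbT_j)^{-1})$ and taking its real SVD $\sum_{i=1}^I\alpha_{j,i}\mbgamma_{j,i}\mbbeta_{j,i}^\tp$ yields $\diag((\cdots)^{-1})=\sum_i\alpha_{j,i}\,\mbbeta_{j,i}\otimes\mbgamma_{j,i}$. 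Then, using $\diag(\mbbeta\otimes\mbgamma)=\diag(\mbbeta)\otimes\diag(\mbgamma)$ together with the definitions $\mbD_{\mbbeta_{j,i}}=\alpha_{j,i}\diag(\mbbeta_{j,i})$ and $\mbD_{\mbgamma_{j,i}}=\diag(\mbgamma_{j,i})$, I obtain $\mbM_j^{-1}=\sum_{i=1}^I(\mbV_j\mbD_{\mbbeta_{j,i}}\mbV_j^\herm)\otimes(\mbW_j\mbD_{\mbgamma_{j,i}}\mbW_j^\herm)$, where $I=\min(n_\mathrm{p},\Nrx)$ is the rank of the reshaped matrix.

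Finally I would differentiate. From $d\log\det\mbM_j=\tr(\mbM_j^{-1}d\mbM_j)$ with $d\mbM_j=\tfrac{1}{\sigma_n^2}[(d\mbP_t)\mbC_{\mbdelta_j}^{\mathrm{tx}}\mbP_t^\herm+\mbP_t\mbC_{\mbdelta_j}^{\mathrm{tx}}(d\mbP_t)^\herm]\otimes\mbC_{\mbdelta_j}^{\mathrm{rx}}$, I collect the coefficient of $(d\mbP_t)^\herm$ to read off the Wirtinger gradient $\partial g_j/\partial\mbP_t^*$. The trace rule $\tr((\mbA\otimes\mbB)(\mbC\otimes\mbD))=\tr(\mbA\mbC)\tr(\mbB\mbD)$ then separates the two sides: the receive factor contributes $\tr(\mbW_j\mbD_{\mbgamma_{j,i}}\mbW_j^\herm\mbC_{\mbdelta_j}^{\mathrm{rx}})=\tr(\mbD_{\mbgamma_{j,i}}\mbT_j)$, and the transmit factor contributes $\mbV_j\mbD_{\mbbeta_{j,i}}\mbV_j^\herm\mbP_t\mbC_{\mbdelta_j}^{\mathrm{tx}}$, so that $\partial g_j/\partial\mbP_t^*=\tfrac{1}{\sigma_n^2}\sum_i\mbV_j\mbD_{\mbbeta_{j,i}}\mbV_j^\herm\mbP_t\mbC_{\mbdelta_j}^{\mathrm{tx}}\tr(\mbD_{\mbgamma_{j,i}}\mbT_j)$. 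With the Lagrangian $\mathcal{L}=\sum_jg_j-\lambda(\tr(\mbP_t\mbP_t^\herm)-\rho n_\mathrm{p})$ and $\partial\tr(\mbP_t\mbP_t^\herm)/\partial\mbP_t^*=\mbP_t$, setting $\partial\mathcal{L}/\partial\mbP_t^*=0$ and summing over $j$ yields \eqref{eq:sum_cmi_mumimo}, with $\lambda\geq0$ the multiplier of the active power constraint.
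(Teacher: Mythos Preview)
Your proposal is correct and follows essentially the same route as the paper: collapse the Kronecker structure, diagonalize via the eigendecompositions \eqref{eq:svd_pctx}--\eqref{eq:svd_crx}, express $(\mathbf{I}+\tfrac{1}{\sigma_n^2}\mbS_j\otimes\mbT_j)^{-1}$ as the Kronecker-product SVD $\sum_i\mbD_{\mbbeta_{j,i}}\otimes\mbD_{\mbgamma_{j,i}}$, and then separate the trace to obtain the gradient and the KKT stationarity condition. The only cosmetic difference is that the paper computes the gradient entry-wise via $\partial/\partial[\mbP_t^*]_{r,c}$ and the derivation rule of \cite[eq.~(46)]{PePe08}, whereas you use the matrix differential $d\log\det\mbM_j=\tr(\mbM_j^{-1}d\mbM_j)$ and read off the Wirtinger gradient directly; both lead to the identical expression.
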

\begin{proof}
See \Cref{sec:proof_optimal_sequnces}.
\end{proof}

In the following corollary, we consider the special case with $n_p \leq \Nrx$, which allows for a simplification.
\begin{corollary}
    The pilot matrix $\mbP^\star_t$ that maximizes the sum \ac{CMI} in a \ac{MU-MIMO} system provided in \eqref{eq:sumcmi_maximization} with $n_\mathrm{p} \leq \Nrx$ satisfies the condition
\begin{equation} \label{eq:sum_cmi_mumimo_npleqnrx}
    \sum_{j=1}^J \sum_{i=1}^{n_\mathrm{p}} \tfrac{1}{\sigma_n^2} \mbv_{j,i} \mbv_{j,i}^\herm \mbP^\star_t \mbC_{\mbdelta_j}^{\mathrm{tx}} \tr\left( \mbD_{j,\bar{i}} \mbT_j\right) = \lambda \mbP^\star_t,
\end{equation}
utilizing the \acp{SVD} for all $j \in \mathcal{J}$ from \eqref{eq:svd_pctx} and \eqref{eq:svd_crx}, where $\mbv_{j,i}$ denotes the $i$-th column of $\mbV_j$, and
\begin{align}
    & (\mathbf{I} + \tfrac{1}{\sigma_n^2} \mbS_j \otimes \mbT_j )\inv \nonumber \\
    & = \diag\left([d_{j,1}, \dots, d_{j,n_\mathrm{p}\Nrx} ]^\tp \right) = \sum_{i=1}^{n_\mathrm{p}} \diag(\mathbf{e}_i) \otimes \mbD_{j,\bar{i}}, \label{cor:simplifications}
\end{align}
with the $n_\mathrm{p}$-dimensional vectors $\mathbf{e}_i$, and $\mbD_{j,\bar{i}} = \diag([d_{j,(i-1)\Nrx+1}, \dots, d_{j,i\Nrx}]^\tp)$, cf. \eqref{eq:kpsvd_thm}, where $\lambda \geq 0$ is a constant chosen to satisfy the power constraint. 
\end{corollary}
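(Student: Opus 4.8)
The plan is to derive the Corollary as the specialization of \Cref{thm:optimal_sequnces} in which the \ac{SVD} of the reshaped matrix $\mbR_j = \uvect_{\Nrx, n_\mathrm{p}}(\diag((\mathbf{I} + \tfrac{1}{\sigma_n^2}\mbS_j\otimes\mbT_j)\inv))$ is replaced by its trivial column factorization. First I would isolate the role played by the diagonal factors in \eqref{eq:sum_cmi_mumimo}: writing $\mbD_{\mbbeta_{j,i}} = \diag(\mathbf{b}_{j,i})$ and $\mbD_{\mbgamma_{j,i}} = \diag(\mathbf{c}_{j,i})$, the two families $\{\mathbf{b}_{j,i}\},\{\mathbf{c}_{j,i}\}$ serve only to supply a rank-one factorization $\mbR_j = \sum_i \mathbf{c}_{j,i}\mathbf{b}_{j,i}^\tp$ of the reshaped matrix, as follows from $\vect(\mathbf{c}\,\mathbf{b}^\tp) = \mathbf{b}\otimes\mathbf{c}$ together with the diagonal-Kronecker identity $\diag(\mathbf{x}\otimes\mathbf{y}) = \diag(\mathbf{x})\otimes\diag(\mathbf{y})$. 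The \ac{SVD} of $\mbR_j$ used in \Cref{thm:optimal_sequnces} is one particular such factorization, namely $\mathbf{b}_{j,i} = \alpha_{j,i}\mbbeta_{j,i}$, $\mathbf{c}_{j,i} = \mbgamma_{j,i}$.

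The key step is to show that each $j$-summand of \eqref{eq:sum_cmi_mumimo} is invariant under the choice of this rank-one factorization. Using $\tr(\mbD_{\mbgamma_{j,i}}\mbT_j) = \mathbf{t}_j^\tp\mathbf{c}_{j,i}$ with $\mathbf{t}_j = \diag(\mbT_j)$, and $\mbV_j\mbD_{\mbbeta_{j,i}}\mbV_j^\herm = \sum_a b_{j,i,a}\,\mbv_{j,a}\mbv_{j,a}^\herm$, I would exchange the order of summation and collapse the sum over $i$ to obtain $\sum_a \mbv_{j,a}\mbv_{j,a}^\herm\,(\tfrac{1}{\sigma_n^2}\mbP^\star_t\mbC_{\mbdelta_j}^{\mathrm{tx}})\,(\mathbf{t}_j^\tp\mbR_j\mathbf{e}_a)$, where the scalar weight $\mathbf{t}_j^\tp\mbR_j\mathbf{e}_a$ depends on $\mbR_j$ alone. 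Hence the \ac{SVD} in \Cref{thm:optimal_sequnces} may be exchanged for any valid rank-one factorization of $\mbR_j$ without altering the stationarity condition.

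I would then specialize to $n_\mathrm{p}\leq\Nrx$ and pick the column factorization $\mbR_j = \sum_{i=1}^{n_\mathrm{p}}\mbr_{j,i}\mathbf{e}_i^\tp$, where $\mbr_{j,i}$ is the $i$-th column of $\mbR_j$ and $\mathbf{e}_i\in\R^{n_\mathrm{p}}$; this uses exactly $I = \min(n_\mathrm{p},\Nrx) = n_\mathrm{p}$ rank-one terms and requires no \ac{SVD} computation. Identifying $\mathbf{b}_{j,i} = \mathbf{e}_i$ and $\mathbf{c}_{j,i} = \mbr_{j,i}$ gives $\mbD_{\mbbeta_{j,i}} = \diag(\mathbf{e}_i)$ and $\mbD_{\mbgamma_{j,i}} = \mbD_{j,\bar{i}} = \diag(\mbr_{j,i})$ with $\mbr_{j,i} = [d_{j,(i-1)\Nrx+1},\dots,d_{j,i\Nrx}]^\tp$, and the block identity \eqref{cor:simplifications} is immediate, since $\diag(\mathbf{e}_i)\otimes\mbD_{j,\bar{i}}$ is supported on the $i$-th diagonal $\Nrx\times\Nrx$ block of $(\mathbf{I}+\tfrac{1}{\sigma_n^2}\mbS_j\otimes\mbT_j)\inv$. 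Finally, substituting $\mbV_j\diag(\mathbf{e}_i)\mbV_j^\herm = \mbv_{j,i}\mbv_{j,i}^\herm$ into \eqref{eq:sum_cmi_mumimo} produces \eqref{eq:sum_cmi_mumimo_npleqnrx}.

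The main obstacle is the invariance argument of the second step: one has to recognize that, although \Cref{thm:optimal_sequnces} is stated via the \ac{SVD}, the left-hand side of \eqref{eq:sum_cmi_mumimo} is in truth a function of the reshaped matrix $\mbR_j$ only, so that the cheaper column factorization is admissible and collapses each $\mbV_j\mbD_{\mbbeta_{j,i}}\mbV_j^\herm$ to a single outer product $\mbv_{j,i}\mbv_{j,i}^\herm$. The remaining verifications — the diagonal-Kronecker identity and the block structure behind \eqref{cor:simplifications} — are routine bookkeeping.
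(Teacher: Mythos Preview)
Your proposal is correct and follows essentially the same route as the paper. The paper's justification of the corollary is the one-line remark immediately after it: any diagonal $(n_\mathrm{p}\Nrx)\times(n_\mathrm{p}\Nrx)$ matrix admits the block decomposition $\sum_{i=1}^{n_\mathrm{p}}\diag(\mathbf{e}_i)\otimes\mbD_{\bar i}$, which is precisely your column factorization of $\mbR_j$; substituting this in place of the Kronecker \ac{SVD} in \eqref{eq:kpsvd_thm} and using $\mbV_j\diag(\mathbf{e}_i)\mbV_j^\herm=\mbv_{j,i}\mbv_{j,i}^\herm$ yields \eqref{eq:sum_cmi_mumimo_npleqnrx}. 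The only difference is that you make explicit what the paper leaves to the reader, namely the invariance of the left-hand side of \eqref{eq:sum_cmi_mumimo} under the choice of rank-one factorization of $\mbR_j$ (equivalently, that the proof of \Cref{thm:optimal_sequnces} uses only the Kronecker-sum identity \eqref{eq:kpsvd_thm}, not the \ac{SVD} structure itself). Your collapse to $\sum_a \mbv_{j,a}\mbv_{j,a}^\herm(\tfrac{1}{\sigma_n^2}\mbP_t^\star\mbC_{\mbdelta_j}^{\mathrm{tx}})(\mathbf{t}_j^\tp\mbR_j\mathbf{e}_a)$ is a clean way to see this invariance and is a welcome clarification, but it does not constitute a genuinely different method.
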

Note that in \eqref{cor:simplifications} we used that any diagonal matrix $\mbD \in  \mathbb{R}^{{n_\mathrm{p} \Nrx} \times {n_\mathrm{p} \Nrx}}$ with $n_p \leq \Nrx$ can be decomposed as $\mbD =  \diag\left([d_{1}, \dots, d_{n_\mathrm{p}\Nrx} ]^\tp \right) = \sum_{i=1}^{n_\mathrm{p}} \diag(\mathbf{e}_i) \otimes \mbD_{\bar{i}}$,
with the $n_\mathrm{p}$-dimensional vectors $\mathbf{e}_i$, and the $(\Nrx \times \Nrx)$-dimensional diagonal matrices $\mbD_{\bar{i}} = \diag([d_{(i-1)\Nrx+1}, \dots, d_{i\Nrx}]^\tp)$. 

\begin{algorithm}[t]
\captionsetup{font=small}
\caption{Iterative Pilot Matrix Design for \ac{MU-MIMO} Systems}
\label{alg:iterpdesign}
    \begin{algorithmic}[1] \small
    \STATE For $\ell = 0$, initialize pilot matrix $\mbP_t^{(0)}$,
    and set maximum number of iterations $L_{\max}$.
    Compute SVDs for all $j \in \mathcal{J}$, $\mbC_{\mbdelta_j}^{\mathrm{rx}}=\mbW_j \mbT_j \mbW_j^\herm$.
    Set $\varepsilon$.
    Set $\ell=1$.
    \REPEAT
    \STATE Compute SVDs for all $j \in \mathcal{J}$,
    
    $\mbP_t^{(\ell-1)} \mbC_{\mbdelta_j}^{\mathrm{tx}} \mbP_t^{(\ell-1),\herm} = \mbV_j^{(\ell-1)} \mbS_j^{(\ell-1)} \mbV_j^{(\ell-1),\herm}$ 
    \STATE Compute real-valued SVDs for all $j \in \mathcal{J}$, $I=\min(n_\mathrm{p}, \Nrx)$,
    
    $\sum_{i=1}^{I} \alpha_{j,i}^{(\ell-1)} \mbgamma_{j,i}^{(\ell-1)} \mbbeta_{j,i}^{(\ell-1),\tp}$
    
    $=\uvect_{\Nrx, n_\mathrm{p}} \left( \diag \left( (\mathbf{I} + \tfrac{1}{\sigma_n^2} \mbS_j^{(\ell-1)} \otimes \mbT_j )\inv \right) \right)$
    

    \STATE Set for all $j$ and $i$,
    
    $\mbD_{\mbbeta_{j,i}}^{(\ell-1)} = \alpha_{j,i}^{(\ell-1)} \diag\left(\mbbeta_{j,i}^{(\ell-1)}\right)$
        
    $\mbD_{\mbgamma_{j,i}}^{(\ell-1)} = \diag\left(\mbgamma_{j,i}^{(\ell-1)}\right)$
 
    \STATE Calculate intermediate pilot matrix {$\tilde{\mbP}_t^{(\ell)}$

    $= \sum\limits_{j=1}^J \sum\limits_{i=1}^I \frac{\tr( \mbD_{\mbgamma_{j,i}}^{(\ell-1)} \mbT_j)}{\sigma_n^2} \mbV_j^{(\ell-1)} \mbD_{\mbbeta_{j,i}}^{(\ell-1)} \mbV_j^{(\ell-1),\herm} \mbP_t^{(\ell-1)} \mbC_{\mbdelta_j}^{\mathrm{tx}}$}

    \STATE Apply Normalization to fulfill power constraint,

    $\mbP_t^{(\ell)} = \sqrt{\frac{\rho n_\mathrm{p}}{\tr\left(\tilde{\mbP}_t^{(\ell)}\tilde{\mbP}_t^{(\ell),\herm}\right)}} \tilde{\mbP}_t^{(\ell)}$

    \STATE $\ell \gets \ell+1$

    \UNTIL{Spectral norm $\|\mbP_t^{(\ell)} - \mbP_t^{(\ell-1)}\|_2 < \varepsilon$ or $\ell \geq L_{\mathrm{max}}$.}
    \end{algorithmic}
\end{algorithm}

In general, the optimization problem in \eqref{eq:sumcmi_maximization} is a non-convex problem (see the discussion at the end of this subsection).
Thus, similar to the algorithm for \acp{MT} with single antennas~\cite{JiMoCaNi15}, we outline an iterative algorithm that aims to find the optimal pilot matrix based on the first-order \ac{KKT} condition for \ac{MU-MIMO} systems in \Cref{alg:iterpdesign}.
Throughout this work, we have $\varepsilon=10^{-3}$.
As initialization, we apply either a random pilot matrix or a random selection of columns of an (oversampled) \ac{DFT} matrix for $\mbP_t^{(0)}$. 
We discuss the effect of the initialization as well as the specific selection of the maximum number
of iterations $L_{\mathrm{max}}$ on the algorithm's performance in \Cref{sec:sim_results_mumimo}.

To reduce the computational overhead of optimizing \eqref{eq:sumcmi_maximization}, we establish a lower bound on the sum \ac{CMI} expression serving as an approximate objective.
We further present conditions on when the lower bound exhibits the largest gap to the sum \ac{CMI} and when the lower bound is equal to the sum \ac{CMI}.
Accordingly, it is possible to maximize the established lower bound expression instead of the sum \ac{CMI}.

\begin{theorem} \label{thm:lower_bound}
    The sum \ac{CMI} expression of a \ac{MU-MIMO} system provided as the utility function in \eqref{eq:sumcmi_maximization} can be lower bounded by
    \begin{equation} \label{eq:lower_bound}
        \sum_{j=1}^J \log\det \left( \mathbf{I} + \tfrac{1}{\sigma_n^2} (\mbP_t \tr(\mbC_{\mbdelta_j}^{\mathrm{rx}}) \mbC_{\mbdelta_j}^{\mathrm{tx}} \mbP_t^\herm) \right).
    \end{equation}
\end{theorem}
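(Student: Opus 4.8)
The plan is to reduce the matrix determinant inequality to a product of elementary scalar inequalities by exploiting the Kronecker structure $\mbC_{\mbdelta_j} = \mbC_{\mbdelta_j}^{\mathrm{tx}} \otimes \mbC_{\mbdelta_j}^{\mathrm{rx}}$. First I would apply the mixed-product property of the Kronecker product to the argument of each determinant in \eqref{eq:sumcmi_maximization}. Using $(\mbP_t \otimes \mathbf{I}_{\Nrx})^\herm = \mbP_t^\herm \otimes \mathbf{I}_{\Nrx}$, one obtains $(\mbP_t \otimes \mathbf{I}_{\Nrx})(\mbC_{\mbdelta_j}^{\mathrm{tx}} \otimes \mbC_{\mbdelta_j}^{\mathrm{rx}})(\mbP_t^\herm \otimes \mathbf{I}_{\Nrx}) = (\mbP_t \mbC_{\mbdelta_j}^{\mathrm{tx}} \mbP_t^\herm) \otimes \mbC_{\mbdelta_j}^{\mathrm{rx}}$, so each summand becomes $\log\det(\mathbf{I} + \tfrac{1}{\sigma_n^2}\,\mbA_j \otimes \mbC_{\mbdelta_j}^{\mathrm{rx}})$ with $\mbA_j := \mbP_t \mbC_{\mbdelta_j}^{\mathrm{tx}} \mbP_t^\herm$, which is positive semidefinite. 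Since both the utility and the bound are sums over $j$, it suffices to establish the inequality term by term.

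Next I would diagonalize. Let $\{a_m\}_{m=1}^{n_\mathrm{p}}$ denote the eigenvalues of $\mbA_j$ and $\{b_r\}_{r=1}^{\Nrx}$ those of $\mbC_{\mbdelta_j}^{\mathrm{rx}}$, all nonnegative. The eigenvalues of $\mbA_j \otimes \mbC_{\mbdelta_j}^{\mathrm{rx}}$ are exactly the products $a_m b_r$, so the $j$-th summand of the sum \ac{CMI} equals $\sum_{m}\sum_{r}\log(1 + \tfrac{1}{\sigma_n^2} a_m b_r)$. On the other side, because $\tr(\mbC_{\mbdelta_j}^{\mathrm{rx}}) = \sum_r b_r$ is a scalar, the matrix $\tr(\mbC_{\mbdelta_j}^{\mathrm{rx}})\,\mbA_j$ has eigenvalues $a_m \sum_r b_r$, whence the $j$-th term of \eqref{eq:lower_bound} equals $\sum_m \log(1 + \tfrac{1}{\sigma_n^2} a_m \sum_r b_r)$. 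The claim therefore reduces to proving, for each fixed $m$ and with $c := \tfrac{1}{\sigma_n^2} a_m \geq 0$, the scalar inequality $\sum_{r=1}^{\Nrx}\log(1 + c\,b_r) \geq \log(1 + c\sum_{r=1}^{\Nrx} b_r)$.

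This scalar step is the technical heart, and it follows by exponentiating and expanding: since every factor satisfies $c\,b_r \geq 0$, the product $\prod_{r=1}^{\Nrx}(1 + c\,b_r)$ expands to $1 + c\sum_r b_r$ plus a sum of products of two or more nonnegative terms, so $\prod_{r}(1 + c\,b_r) \geq 1 + c\sum_r b_r$, and monotonicity of $\log$ finishes the argument. I do not expect a genuine obstacle here; the main thing to get right is the bookkeeping of the Kronecker eigenvalue structure and verifying that $\mbA_j$ is positive semidefinite, so that all $a_m \geq 0$ and the product expansion is valid. For the accompanying ``largest gap'' and ``equality'' discussion promised in the text, the same expansion is informative: equality in the term for index $m$ (with $a_m > 0$) holds precisely when at most one $b_r$ is nonzero, i.e.\ when $\rk(\mbC_{\mbdelta_j}^{\mathrm{rx}}) \leq 1$, whereas the discarded cross terms $c^2 \sum_{r<r'} b_r b_{r'}$ quantify the size of the gap and are maximized when the receive-side spectrum is spread out.
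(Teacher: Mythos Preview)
Your proof is correct. Both your argument and the paper's start identically: use the mixed-product property to rewrite each summand as $\log\det(\mathbf{I}+\tfrac{1}{\sigma_n^2}\,\mbA_j\otimes\mbC_{\mbdelta_j}^{\mathrm{rx}})$ with $\mbA_j=\mbP_t\mbC_{\mbdelta_j}^{\mathrm{tx}}\mbP_t^\herm$, then diagonalize both factors. The key inequality step, however, is handled differently. You reduce completely to scalars and invoke the elementary product expansion $\prod_r(1+cb_r)\geq 1+c\sum_r b_r$ for nonnegative reals (a Weierstrass-type inequality). The paper instead stays at the matrix level: it normalizes the receive eigenvalues to weights $t_{j,i}$ summing to one, writes $\mathbf{I}+\tfrac{1}{\sigma_n^2}\mbS_j\otimes\mbT_j$ as the convex combination $\sum_i t_{j,i}\big(\mathbf{I}+\tfrac{1}{\sigma_n^2}\tr(\mbT_j)\,\mbS_j\otimes\diag(\mathbf{e}_i)\big)$, and applies concavity of $\log\det$ (Jensen, citing Horn--Johnson); each matrix in the combination has the same determinant $\det(\mathbf{I}+\tfrac{1}{\sigma_n^2}\tr(\mbT_j)\mbS_j)$, which yields the bound after summing. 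Your route is more elementary and self-contained, avoiding the matrix-concavity citation; the paper's route has the advantage that the equality characterization (rank-one receive covariance) falls out directly from the equality case of matrix Jensen, though your cross-term argument reaches the same conclusion just as transparently.
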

\begin{proof}
See \Cref{sec:proof_lower_bound}.
\end{proof}

\begin{proposition} \label{prop:lower_bound_equal}
    The lower bound expression from \eqref{eq:lower_bound} is equal to the sum \ac{CMI} in a \ac{MU-MIMO} system provided in \eqref{eq:sumcmi_maximization} for the same but arbitrary $\mbP_t$ if and only if for all $j \in \mathcal{J}$
    \begin{equation}
        \rk(\mbC_{\mbdelta_j}^{\mathrm{rx}}) = 1.
    \end{equation}
\end{proposition}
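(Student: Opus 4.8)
The plan is to reduce everything to scalar eigenvalue inequalities. First I would exploit the Kronecker structure $\mbC_{\mbdelta_j} = \mbC_{\mbdelta_j}^{\mathrm{tx}} \otimes \mbC_{\mbdelta_j}^{\mathrm{rx}}$ together with the mixed-product property of the Kronecker product to write $(\mbP_t \otimes \mathbf{I}_{\Nrx}) \mbC_{\mbdelta_j} (\mbP_t \otimes \mathbf{I}_{\Nrx})^\herm = (\mbP_t \mbC_{\mbdelta_j}^{\mathrm{tx}} \mbP_t^\herm) \otimes \mbC_{\mbdelta_j}^{\mathrm{rx}}$. Denoting by $a_{j,1}, \dots, a_{j,n_\mathrm{p}}$ the eigenvalues of $\tfrac{1}{\sigma_n^2}\mbP_t \mbC_{\mbdelta_j}^{\mathrm{tx}} \mbP_t^\herm$ and by $b_{j,1}, \dots, b_{j,\Nrx}$ those of $\mbC_{\mbdelta_j}^{\mathrm{rx}}$ (all nonnegative, as both matrices are positive semidefinite), the eigenvalues of the Kronecker product are the products $a_{j,p} b_{j,q}$. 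Hence the $j$-th summand of the sum \ac{CMI} in \eqref{eq:sumcmi_maximization} equals $\sum_{p=1}^{n_\mathrm{p}} \sum_{q=1}^{\Nrx} \log(1 + a_{j,p} b_{j,q})$, whereas, using $\tr(\mbC_{\mbdelta_j}^{\mathrm{rx}}) = \sum_{q=1}^{\Nrx} b_{j,q}$, the $j$-th summand of the lower bound \eqref{eq:lower_bound} equals $\sum_{p=1}^{n_\mathrm{p}} \log(1 + a_{j,p} \sum_{q=1}^{\Nrx} b_{j,q})$.

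The core is then the elementary inequality $\prod_{q=1}^{\Nrx}(1 + a b_q) \ge 1 + a \sum_{q=1}^{\Nrx} b_q$ for nonnegative reals, obtained by expanding the product and discarding the nonnegative higher-order terms; taking logarithms reproduces the per-$j$, per-$p$ inequality and re-establishes \Cref{thm:lower_bound}. The equality analysis is the crux: the product equals $1 + a\sum_q b_q$ precisely when all discarded cross terms vanish, i.e.\ when $a = 0$ or at most one $b_q$ is positive. Since the overall gap between the sum \ac{CMI} and the lower bound is a sum of such nonnegative per-$(j,p)$ gaps, the two expressions coincide if and only if every individual gap vanishes.

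For the ($\Leftarrow$) direction, if $\rk(\mbC_{\mbdelta_j}^{\mathrm{rx}}) = 1$ for all $j$, then each $\mbC_{\mbdelta_j}^{\mathrm{rx}}$ has a single positive eigenvalue, equal to $\tr(\mbC_{\mbdelta_j}^{\mathrm{rx}})$, so every cross term vanishes regardless of the $a_{j,p}$ and equality holds for every $\mbP_t$. For the ($\Rightarrow$) direction I would argue by contraposition: if $\rk(\mbC_{\mbdelta_j}^{\mathrm{rx}}) \ge 2$ for some $j$, then two eigenvalues $b_{j,q}, b_{j,q'}$ are strictly positive, and since $\mbC_{\mbdelta_j}^{\mathrm{tx}}$ is a nonzero positive semidefinite matrix one can choose $\mbP_t$ with $\mbP_t \mbC_{\mbdelta_j}^{\mathrm{tx}} \mbP_t^\herm \ne 0$, producing at least one $a_{j,p} > 0$; the corresponding gap is then strictly positive, so the bound is strict for that $\mbP_t$. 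This shows that equality for arbitrary $\mbP_t$ forces $\rk(\mbC_{\mbdelta_j}^{\mathrm{rx}}) \le 1$, and since a nonzero covariance matrix has rank at least one, $\rk(\mbC_{\mbdelta_j}^{\mathrm{rx}}) = 1$ follows.

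The main obstacle I anticipate is making the equality characterization airtight, in particular handling the quantifier over $\mbP_t$ cleanly in the forward direction: one must verify that a $\mbP_t$ activating a positive eigenvalue $a_{j,p}$ genuinely exists, which rests on $\mbC_{\mbdelta_j}^{\mathrm{tx}}$ being nonzero, and that the per-term gaps aggregate correctly so that a single strictly positive gap defeats global equality.
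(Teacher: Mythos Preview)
Your argument is correct and reaches the same conclusion as the paper, but via a more elementary route. The paper stays at the matrix level: it writes the normalized receive-side covariance as a convex combination $\tfrac{1}{\tr(\mbT_j)}\mbT_j=\sum_i t_{j,i}\diag(\mathbf{e}_i)$ and invokes the strict concavity of $\log\det$ on positive definite matrices (Horn--Johnson, Th.~7.6.6) to obtain the Jensen inequality used in the proof of \Cref{thm:lower_bound} together with its equality condition. You instead diagonalize fully to scalars and use the product expansion $\prod_q(1+ab_q)=1+a\sum_q b_q+\text{(nonnegative cross terms)}$. Both yield the same per-$(j,p)$ inequality; your equality criterion (all cross terms vanish $\Leftrightarrow$ $a=0$ or at most one $b_q>0$) is precisely the scalar shadow of the paper's matrix criterion (all matrices with positive weight coincide $\Leftrightarrow$ $\mbS_j=0$ or only one $t_{j,i}>0$). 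Your version is self-contained and makes the equality characterization immediately transparent; the paper's version has the virtue of reusing the exact Jensen step from the proof of \Cref{thm:lower_bound}. You are also slightly more explicit than the paper about the contrapositive needing $\mbC_{\mbdelta_j}^{\mathrm{tx}}\neq 0$ so that some $a_{j,p}>0$ can be activated by a suitable $\mbP_t$.
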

\begin{proof}
See \Cref{sec:proof_lower_bound_equal}.
\end{proof}

\begin{proposition} \label{prop:lower_bound_largestgap}
    The lower bound expression from \eqref{eq:lower_bound} attains the largest gap to the sum \ac{CMI} in a \ac{MU-MIMO} system provided in \eqref{eq:sumcmi_maximization} for the same but arbitrary $\mbP_t$, with fixed traces $\tau_j=\tr(\mbC_{\mbdelta_j}^{\mathrm{rx}})$, if and only if for all $j \in \mathcal{J}$
    \begin{equation}
        \tfrac{1}{\tau_j} \mbC_{\mbdelta_j}^{\mathrm{rx}} = \tfrac{1}{\Nrx}\mathbf{I}.
    \end{equation}
\end{proposition}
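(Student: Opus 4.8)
The plan is to reduce the gap between the sum \ac{CMI} and its lower bound to a separable function of the eigenvalues of each receive covariance and then to invoke a strict concavity (Schur-concavity) argument. First I would exploit the Kronecker structure $\mbC_{\mbdelta_j} = \mbC_{\mbdelta_j}^{\mathrm{tx}} \otimes \mbC_{\mbdelta_j}^{\mathrm{rx}}$ together with the mixed-product rule to write $(\mbP_t \otimes \mathbf{I}_{\Nrx})\mbC_{\mbdelta_j}(\mbP_t \otimes \mathbf{I}_{\Nrx})^\herm = (\mbP_t \mbC_{\mbdelta_j}^{\mathrm{tx}}\mbP_t^\herm) \otimes \mbC_{\mbdelta_j}^{\mathrm{rx}}$. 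Denoting the eigenvalues of $\mbA_j := \mbP_t \mbC_{\mbdelta_j}^{\mathrm{tx}}\mbP_t^\herm$ by $a_{j,1},\dots,a_{j,n_\mathrm{p}}$ and those of $\mbC_{\mbdelta_j}^{\mathrm{rx}}$ by $c_{j,1},\dots,c_{j,\Nrx}$, the eigenvalues of the Kronecker product are the products $a_{j,m}c_{j,l}$, so the $j$-th sum \ac{CMI} term equals $\sum_m\sum_l \log(1 + \tfrac{1}{\sigma_n^2}a_{j,m}c_{j,l})$, whereas the $j$-th lower bound term equals $\sum_m \log(1 + \tfrac{1}{\sigma_n^2}\tau_j a_{j,m})$, using $\tau_j = \sum_l c_{j,l}$.

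Second, I would form the per-user gap and note that, for a fixed pilot matrix $\mbP_t$ (hence fixed $\mbA_j$) and fixed trace $\tau_j$, the lower bound term is a constant. Thus maximizing the total gap over the admissible receive covariances is equivalent to maximizing $\sum_j\sum_m\sum_l \log(1 + \tfrac{1}{\sigma_n^2}a_{j,m}c_{j,l})$ subject to $\sum_l c_{j,l} = \tau_j$ and $c_{j,l} \geq 0$, a problem that decouples across $j$.

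Third, for each $j$ I would argue that $f_j(\mbc_j) = \sum_m\sum_l \log(1 + \tfrac{1}{\sigma_n^2}a_{j,m}c_{j,l})$ is a symmetric, separable, and (whenever at least one $a_{j,m} > 0$) strictly concave function of the eigenvalue vector $\mbc_j = (c_{j,1},\dots,c_{j,\Nrx})$. Such a function is Schur-concave, so over the trace-constrained simplex it attains its unique maximum at the most balanced point $c_{j,l} = \tau_j/\Nrx$ for all $l$, equivalently $\tfrac{1}{\tau_j}\mbC_{\mbdelta_j}^{\mathrm{rx}} = \tfrac{1}{\Nrx}\mathbf{I}$; this can be shown directly via a Lagrangian or Jensen argument applied to the strictly concave summand $c \mapsto \log(1 + \tfrac{1}{\sigma_n^2}a_{j,m}c)$. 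Strict concavity yields the ``only if'' direction through uniqueness of the maximizer, while substituting the isotropic covariance gives the ``if'' direction. Summing over $j$ and observing that the optimal $\mbc_j$ is independent of $\mbA_j$, and hence of $\mbP_t$, shows that the characterization holds simultaneously for every arbitrary $\mbP_t$, completing the equivalence.

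The main obstacle I anticipate is the careful justification of strict Schur-concavity and the attendant uniqueness required for the ``if and only if'' claim, namely ruling out the degenerate case $\mbA_j = \mbzero$ (no pilot energy in user $j$'s transmit eigenspace) in which the gap vanishes identically and the maximizer is not unique; I would address this by noting that for any illuminating $\mbP_t$ the matrix $\mbA_j$ possesses a positive eigenvalue, which is precisely the regime in which the largest-gap statement is meaningful. As a consistency check I would also verify against \Cref{prop:lower_bound_equal}: the Schur-concave gap vanishes at the opposite extreme point $(\tau_j,0,\dots,0)$, i.e.\ $\rk(\mbC_{\mbdelta_j}^{\mathrm{rx}}) = 1$, which matches the equality condition established there.
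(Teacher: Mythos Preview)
Your proposal is correct and follows essentially the same route as the paper: both reduce the gap to a function of the receive-side eigenvalues via the Kronecker eigenvalue product, observe that the lower bound is constant once $\tau_j$ is fixed so that maximizing the gap amounts to maximizing the sum \ac{CMI} over the trace-constrained eigenvalue simplex, and then conclude by a concavity argument that the unique maximizer is the uniform eigenvalue vector. The paper parametrizes the eigenvalues as $\tfrac{\tau_j}{\Nrx}(1+\kappa_{j,\ell})$ with $\sum_\ell \kappa_{j,\ell}=0$ and applies Jensen's inequality directly to each $(j,i)$ term, whereas you phrase the same step as strict Schur-concavity of the symmetric, separable, strictly concave map $\mbc_j \mapsto \sum_m\sum_l \log(1+\tfrac{1}{\sigma_n^2}a_{j,m}c_{j,l})$; these are equivalent arguments. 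Your explicit handling of the degenerate case $\mbA_j=\mbzero$ (where strictness fails) is in fact more careful than the paper, which tacitly assumes $s_{j,i}>0$.
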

\begin{proof}
See \Cref{sec:proof_lower_bound_largestgap}.
\end{proof}

\begin{corollary} \label{col:lower_bound_opt}
    In conjunction with \cite[Th. 1]{JiMoCaNi15},
    the optimal pilot matrix $\mbP^\star_t$ which maximizes the lower bound expression in~\eqref{eq:lower_bound} satisfies the condition 
    \begin{equation} \small
        \sum_{j=1}^J \tfrac{1}{\sigma_n^2} \left(\mathbf{I} + \tfrac{1}{\sigma_n^2} \mbP^\star_t \tr(\mbC_{\mbdelta_j}^{\mathrm{rx}})\mbC_{\mbdelta_j}^{\mathrm{tx}} \mbP_t^{\star,\herm} \right)\inv \mbP^\star_t \tr(\mbC_{\mbdelta_j}^{\mathrm{rx}}) \mbC_{\mbdelta_j}^{\mathrm{tx}} = \lambda \mbP^\star_t.
    \end{equation}
\end{corollary}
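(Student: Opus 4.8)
The plan is to recognize that the lower-bound utility in \eqref{eq:lower_bound} is, up to a relabeling of the per-user covariance, structurally identical to the single-antenna sum-\ac{CMI} objective treated in \cite[Th.~1]{JiMoCaNi15}, so that their result transfers directly. Concretely, I would define for each $j \in \mathcal{J}$ the effective transmit-side covariance $\tilde{\mbC}_{\mbdelta_j} = \tr(\mbC_{\mbdelta_j}^{\mathrm{rx}}) \mbC_{\mbdelta_j}^{\mathrm{tx}}$. Since $\mbC_{\mbdelta_j}^{\mathrm{rx}}$ is Hermitian positive semidefinite, its trace is a nonnegative scalar, and hence $\tilde{\mbC}_{\mbdelta_j}$ is again Hermitian positive semidefinite. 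With this substitution the lower bound reads $\sum_{j=1}^J \log\det(\mathbf{I} + \tfrac{1}{\sigma_n^2} \mbP_t \tilde{\mbC}_{\mbdelta_j} \mbP_t^\herm)$, which is exactly the utility maximized in \cite{JiMoCaNi15} under the identical power constraint $\tr(\mbP_t \mbP_t^\herm) \leq \rho n_\mathrm{p}$.

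First I would verify the structural match explicitly, checking that the only role the receive side plays in the lower bound is through the scalar factor $\tr(\mbC_{\mbdelta_j}^{\mathrm{rx}})$, so that the optimization variable $\mbP_t$ enters precisely as in the single-antenna problem. Then I would invoke \cite[Th.~1]{JiMoCaNi15} with $\mbC_{\mbdelta_j}$ replaced by $\tilde{\mbC}_{\mbdelta_j}$, which yields the first-order stationarity condition $\sum_{j=1}^J \tfrac{1}{\sigma_n^2} (\mathbf{I} + \tfrac{1}{\sigma_n^2} \mbP_t^\star \tilde{\mbC}_{\mbdelta_j} \mbP_t^{\star,\herm})\inv \mbP_t^\star \tilde{\mbC}_{\mbdelta_j} = \lambda \mbP_t^\star$; resubstituting $\tilde{\mbC}_{\mbdelta_j} = \tr(\mbC_{\mbdelta_j}^{\mathrm{rx}}) \mbC_{\mbdelta_j}^{\mathrm{tx}}$ then gives exactly the claimed condition.

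For completeness, and in case a self-contained derivation is preferred, I would instead form the Lagrangian $\mathcal{L} = \sum_{j=1}^J \log\det(\mathbf{I} + \tfrac{1}{\sigma_n^2} \mbP_t \tilde{\mbC}_{\mbdelta_j} \mbP_t^\herm) - \lambda(\tr(\mbP_t \mbP_t^\herm) - \rho n_\mathrm{p})$ and set the Wirtinger derivative with respect to $\mbP_t^*$ to zero. Using the standard identity $\partial \log\det(\mathbf{I} + \tfrac{1}{\sigma_n^2}\mbP_t \tilde{\mbC}_{\mbdelta_j}\mbP_t^\herm)/\partial \mbP_t^* = \tfrac{1}{\sigma_n^2}(\mathbf{I} + \tfrac{1}{\sigma_n^2}\mbP_t\tilde{\mbC}_{\mbdelta_j}\mbP_t^\herm)\inv \mbP_t\tilde{\mbC}_{\mbdelta_j}$, valid because each $\tilde{\mbC}_{\mbdelta_j}$ is Hermitian, together with $\partial \tr(\mbP_t\mbP_t^\herm)/\partial \mbP_t^* = \mbP_t$, reproduces the same equation.

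The main obstacle is bookkeeping rather than conceptual: one must confirm that $\tilde{\mbC}_{\mbdelta_j}$ inherits the Hermitian positive-semidefinite property that \cite[Th.~1]{JiMoCaNi15} relies on (immediate from the nonnegativity of $\tr(\mbC_{\mbdelta_j}^{\mathrm{rx}})$), and one must be careful with the complex matrix-derivative convention so that the conjugate-transpose placement in the log-det gradient is correct. Beyond this, I would emphasize that the resulting \ac{KKT} equation is only a necessary stationarity condition, not a global optimality certificate, since \eqref{eq:sumcmi_maximization}, and hence its lower bound, is in general non-convex; I would therefore phrase the conclusion, as the corollary does, as a condition satisfied by $\mbP_t^\star$ rather than as a full characterization of it.
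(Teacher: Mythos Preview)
Your proposal is correct and follows essentially the same approach as the paper: the corollary is stated without a separate proof, relying precisely on the observation that the lower bound \eqref{eq:lower_bound} is the \ac{MU-MISO} sum-\ac{CMI} objective of \cite{JiMoCaNi15} with each $\mbC_{\mbdelta_j}^{\mathrm{tx}}$ replaced by the scaled covariance $\tr(\mbC_{\mbdelta_j}^{\mathrm{rx}})\mbC_{\mbdelta_j}^{\mathrm{tx}}$, so that \cite[Th.~1]{JiMoCaNi15} applies verbatim. Your additional self-contained Lagrangian derivation and the remark that the resulting condition is only necessary (the problem being non-convex) are both appropriate elaborations.
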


Due to \Cref{prop:lower_bound_equal}, there are cases where the lower bound provided in \Cref{thm:lower_bound} equals to the \ac{MU-MISO} sum \ac{CMI} expression (with additional scaling by $\tr(\mbC_{\mbdelta_j}^{\mathrm{rx}})$ for all $j \in \mathcal{J}$) in \cite{JiMoCaNi15} and consequently \cite[Remark 1]{JiMoCaNi15} can be utilized to show that the \ac{MU-MIMO} sum \ac{CMI} maximization problem in \eqref{eq:sumcmi_maximization} is not a convex problem.

The sum \ac{CMI} expression in \eqref{eq:sumcmi_maximization} can be replaced by the lower bound expression from \eqref{eq:lower_bound}.
Then, the resulting objective resembles the \ac{MU-MISO} sum \ac{CMI} maximization as in \cite{JiMoCaNi15} with the additional scaling of the transmit side covariance matrix $\mbC_{\mbdelta_j}^{\mathrm{tx}}$ with $\tr(\mbC_{\mbdelta_j}^{\mathrm{rx}})$ for all $j \in \mathcal{J}$.
The main advantage of the lower bound maximization is a reduced computational complexity (see \Cref{sec:comp_ana}) compared to solving the original problem in \eqref{eq:sumcmi_maximization}, since as a consequence to \Cref{thm:lower_bound} and \Cref{col:lower_bound_opt}, the \ac{MU-MISO} optimizer from~\cite{JiMoCaNi15} can be used for pilot design.

\subsection{Conventional Multi-User Pilot Matrix Signaling} \label{sec:conventional_signaling}

In general, each of the \acp{MT} is unaware of the statistics of the other \acp{MT}.
Thus, after designing the pilot matrix at the \ac{BS} following either the iterative \Cref{alg:iterpdesign} outlined in \Cref{sec:mimimo_opt_perfect_stat} utilizing \Cref{thm:optimal_sequnces} or using the optimizer from \cite{JiMoCaNi15} due to \Cref{thm:lower_bound}, the pilot matrix needs to be transferred from the \ac{BS} to the \acp{MT}.
Transferring the complete pilot matrix to the \acp{MT} might result in a significant signaling overhead that contradicts the initial goal of training overhead reduction~\cite{BaXu17, BaXu18}.
A conventional state-of-the-art solution that avoids transferring the complete pilot matrix was introduced in \cite{BaXu17}.
There, a codebook $\mathcal{C}_{\text{RMP}}$, with $|\mathcal{C}_{\text{RMP}}|=2^{B_{\text{RMP}}}$, known to the \ac{BS} and all \acp{MT} is used to quantize the $n_\mathrm{p}$ pilot vectors of the pilot matrix. 
Subsequently, the selected codebook elements indices are broadcasted to all \acp{MT} using a limited feedforward link.
The method proposed in~\cite{BaXu17} avoids a brute force search of the codebook elements and instead relies on the so-called \ac{RMP} algorithm \cite[Algorithm 2]{BaXu17}.
Thereby, the span of the resulting pilot matrix, which can be encoded by in total $n_\mathrm{p} B_{\text{RMP}}$ bits, provides the best approximation of the span of the originally designed pilot matrix.
As suggested in~\cite{BaXu17, BaXu18}, the $\Ntx$-dimensional columns of an oversampled \ac{DFT} matrix are selected as the elements of the codebook $\mathcal{C}_{\text{RMP}}$.

\subsection{GMM-based Multi-User Pilot Matrix Design and Signaling} \label{sec:pilot_design_gmm_MU}

In the multi-user case, we can exploit the same \ac{GMM} as in the single-user case and, thus, require no additional training or adjustment. 
We propose that each \ac{MT} $j$ determines its feedback information via a \ac{MAP} estimation, similar to the single-user case, as
\begin{equation} \label{eq:ecsi_index_j_mu}
    k^\star_{j,t} = \argmax_{k } ~{p(k \mid \mby_{j,t})}.
\end{equation}
For $t>0$, given the feedback information $\{k_{j,t-1}^\star\}_{j=1}^J$ of all \acp{MT} from the preceding block $t-1$, we propose designing the pilot matrix $\mbP_t$ at the \ac{BS} for the subsequent block $t$ by utilizing the respective \ac{GMM}-component's covariance information (cf. \Cref{sec:modeling_gmm})
\begin{equation} \label{eq:ecsi_index_j_mu_cov}
    \{\mbC_{k_{j,t-1}^\star}=\mbC_{k_{j,t-1}^\star}^{\mathrm{tx}} \otimes \mbC_{k_{j,t-1}^\star}^{\mathrm{rx}}\}_{j=1}^J
\end{equation}
associated with $\{k_{j,t-1}^\star\}_{j=1}^J$.
At the \ac{BS}, the covariance information obtained with the help of the \ac{GMM} can then be leveraged to design the pilot matrix $\mbP_t$ by either applying the iterative algorithm (\Cref{alg:iterpdesign}) outlined in \Cref{sec:mimimo_opt_perfect_stat} following \Cref{thm:optimal_sequnces} or using the iterative algorithm from \cite{JiMoCaNi15} following \Cref{col:lower_bound_opt}.
For the initial block $t=0$, we employ a random selection of columns of an (oversampled) \ac{DFT} matrix as the pilot matrix.
Also, in this case, the \acp{MT} are in general unaware of the statistics of the other \acp{MT}, or in this particular case, the covariance information associated with the \ac{GMM} components. 
In contrast to the conventional feedforward signaling (see \Cref{sec:conventional_signaling}), we propose to broadcast the feedback information $\{k_{j,t-1}^\star\}_{j=1}^J$ of all \acp{MT} using a feedforward link to all \acp{MT}, i.e., the \ac{BS} first collects the feedback indices of all \acp{MT} and subsequently feedforwards them to all \acp{MT}.
Accordingly, the feedforward information is encoded by $JB$ bits.
This approach requires each \ac{MT} to design the pilot matrix locally in order to achieve the common pilot matrix knowledge required for the next fading block.
Thereby, the computational effort at the \acp{MT} can be kept low by either utilizing the lower bound optimization approach or restricting the maximum number of iterations $L_{\mathrm{max}}$ of the iterative algorithm outlined in \Cref{sec:mimimo_opt_perfect_stat}.
Furthermore, the initialization of the pilot matrix of the iterative algorithm needs to be based on the same initial pilot matrix.
We use the \ac{DFT} matrix, which is used in the zeroth block or a random pilot matrix as the initial pilot matrix.
The effect of the specific initialization, the selection of $L_{\mathrm{max}}$, and the choice of the optimization approach allows for a complexity-to-performance trade-off and are discussed in \Cref{sec:comp_ana} and \Cref{sec:sim_results_mumimo}.

\section{Discussion on the Versatility of the Proposed Pilot Matrix Design Scheme} \label{sec:discussion_versatility}

\begin{figure}[t]
    \centering
    \includegraphics[scale=0.93]{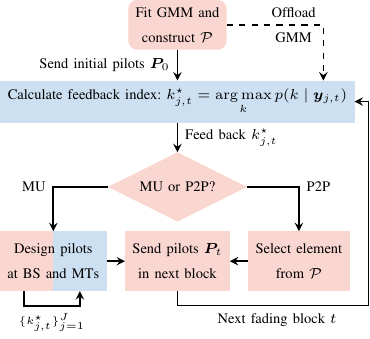}
    \caption{Flowchart of the versatile pilot matrix design scheme. Red (blue) colored nodes are processed at the \ac{BS} (\acp{MT}) and solid (dashed) arrows indicate online (offline) processing.}
    \vspace{-0.4cm}
    \label{fig:flowchart_alt}
\end{figure}

After fitting the \ac{GMM} at the \ac{BS} offline, the model from \eqref{eq:gmm_of_h} is offloaded, i.e., the model parameters are transferred to every \ac{MT} within the coverage area of the \ac{BS}. 
In the online phase, at the \acp{MT}, the \ac{GMM} of the observations [see \eqref{eq:gmm_y}] can be constructed depending on the pilots and \ac{SNR} level and does not require re-training.
By utilizing the \ac{GMM} of the observations, each \ac{MT} can infer its feedback information $k^\star_{j,t}$ by evaluating \eqref{eq:ecsi_index_j_mu}, which is subsequently fed back to the \ac{BS}.
The \ac{BS} can then decide to either serve one \ac{MT} in the point-to-point mode or all \acp{MT} simultaneously in the multi-user mode.
In the single-user mode, the \ac{BS} simply selects the pilot matrix associated with the \ac{MT}'s feedback information of the preceding block $k^\star_{j,t-1}$ from the pilot codebook $\mathcal{P}$, see \Cref{sec:pilot_design_gmm}, and no further processing is required.
In the multi-user mode, a feedforward link is required to broadcast the feedback information $\{k_{j,t-1}^\star\}_{j=1}^J$ of all \acp{MT} from the \ac{BS} to all \acp{MT}.
Then, the pilot matrix $\mbP_t$ for the subsequent fading block can be obtained at the \ac{BS} and \acp{MT} utilizing either the iterative algorithm (\Cref{alg:iterpdesign}) outlined in \Cref{sec:mimimo_opt_perfect_stat} following \Cref{thm:optimal_sequnces} or using the iterative algorithm from \cite{JiMoCaNi15} following \Cref{col:lower_bound_opt}.
The flexibility regarding the transmission mode, together with the possibility to serve any desired number of \acp{MT} $J$, in combination with the straightforward adaption to any number of pilots $n_\mathrm{p}$ and \ac{SNR} level, highlights the versatility of the proposed \ac{GMM}-based pilot design scheme.
The proposed versatile pilot design scheme is summarized as a flowchart in \Cref{fig:flowchart_alt}, where red (blue) colored nodes represent processing steps that are performed at the \ac{BS} (\acp{MT}).


\section{Complexity Analysis} \label{sec:comp_ana}

The online computational complexity of the proposed \ac{GMM}-based scheme can be divided into three parts: 

\subsubsection{Inference of the Feedback Information} Matrix-vector multiplications dominate the computational complexity for inferring the feedback information at the \acp{MT}.
This is because the computation of the responsibilities in~\eqref{eq:responsibilities} entails evaluating Gaussian densities, and the calculations involving determinants and inverses can be pre-computed for a specific \ac{SNR} level due to the fixed \ac{GMM} parameters.
Thus, the inference of the feedback information using \eqref{eq:ecsi_index_j_mu} in the online phase at each \ac{MT} has a complexity of \( \calO(K \Nrx^2 n_\mathrm{p}^2) \), where parallelization concerning the number of components $K$ is possible. 

\subsubsection{Pilot Matrix Design}
The computational complexity of the online pilot matrix design depends on the transmission mode, i.e., single- or multi-user mode.

In the single-user mode, the computational complexity is $\calO(K)$, as it only involves traversing the pre-computed set of pilot matrices $\mathcal{P}$.
Thus, in the online phase, our scheme avoids computing an eigenvalue decomposition, which is required for solving the optimization problem from \eqref{eq:p2p_opt}.

The computational complexity in the multi-user mode depends on the applied optimization algorithm.
In case of the iterative algorithm outlined in \Cref{sec:mimimo_opt_perfect_stat} following \Cref{thm:optimal_sequnces}, the computational complexity per iteration is $\calO(J(n_\mathrm{p}\Ntx^2+n^2_\mathrm{p}\Ntx \min(n_\mathrm{p},\Nrx)))$.
Alternatively, using the iterative algorithm from \cite{JiMoCaNi15} following \Cref{col:lower_bound_opt} exhibits a complexity of $\calO(J n_\mathrm{p}\Ntx^2)$ per iteration.

\subsubsection{Channel Estimation}
Processing the responsibilities to an estimated channel with the \ac{GMM} via \eqref{eq:gmm_estimator_closed_form} exhibits a computational complexity of \(\calO(K\Nrx^2\Ntx n_\mathrm{p})\), since also the \ac{LMMSE} filters for a given \ac{SNR} level can be pre-computed. 
Similar to the feedback inference, parallelization concerning the number of components $K$ is possible when applying the \ac{LMMSE} filters.
As discussed before, in general, any other channel estimator can be used.

\section{Baseline Estimators and Pilot Schemes} \label{sec:baseline_channel_estimators}

Since channel estimation is performed independently at each \ac{MT}, we discuss the estimators from a single \ac{MT} perspective and omit the index $j$ for simplicity.
In addition to the utopian genie \ac{LMMSE} approach \eqref{eq:genie_lmmse}, where we assume perfect knowledge of $\mbdelta$ at the \ac{BS} to design the optimal pilots and at the \ac{MT} to apply the genie \ac{LMMSE} estimator, we consider the following channel estimators and pilot matrices.

Firstly, we consider the \ac{LMMSE} estimator $\hhat_{\text{LMMSE}}$, where the sample covariance matrix is formed using the set $\mathcal{H}$ [see \eqref{eq:H_dataset}], as discussed in, e.g., \cite{TuFeKoJoUt23, KoFeTuUt21J}.

Secondly, we consider a compressive sensing estimation method $\hhat_{\text{OMP}}$ employing the \ac{OMP} algorithm, cf.~\cite{Gharavi, AlLeHe15}.
Since the sparsity order is unknown, but the algorithm's performance heavily depends on it, we use a genie-aided approach to obtain a bound on the algorithm's performance. Specifically, we use perfect channel knowledge to select the optimal sparsity order, cf.~\cite{TuFeKoJoUt23, KoFeTuUt21J}.

Additionally, we compare to an end-to-end \ac{DNN} approach for \ac{DL} channel estimation with a jointly learned pilot matrix $\mbP_\text{DNN}$, similar to \cite{MaGa20, MaGü21}.
To determine the hyperparameters of the \ac{DNN}, we utilize random search \cite{BeBe12}, with the \ac{MSE} serving as the loss function.
The \ac{DNN} architecture comprises $D_\text{CM}$ convolutional modules, each consisting of a convolutional layer, batch normalization, and an activation function, where $D_\text{CM}$ is randomly selected within $[3, 9]$.
Each convolutional layer contains $D_\text{K}$ kernels, where $D_\text{K}$ is randomly selected within $[32, 64]$.
The activation function in each convolutional module is the same and is randomly selected from $ \{\text{ReLu, sigmoid, PReLU, Leaky ReLU, tanh, swish}\}$.
Following a subsequent two-dimensional max-pooling, the features are flattened, and a fully connected layer is employed with an output dimension of~$2\Ntx\Nrx$ (concatenated real and imaginary parts of the estimated channel).
We train a distinct \ac{DNN} for each pilot configuration and \ac{SNR} level, running $50$ random searches per pilot configuration and \ac{SNR} level and selecting the best-performing \ac{DNN} for each setup.

Lastly, as further baseline pilot matrices, we utilize a \ac{DFT} sub-matrix $\mbP_{\text{DFT}}$ as the pilot matrix, see, e.g., \cite{TsZhWa18}, and alternatively, we consider random pilot matrices denoted by $\mbP_{\text{RND}}$, see, e.g.,~\cite{FaLiLiGa17}.

\section{Simulation Results} \label{sec:sim_results}

\begin{figure}[tb]
    \centering
    \includegraphics[]{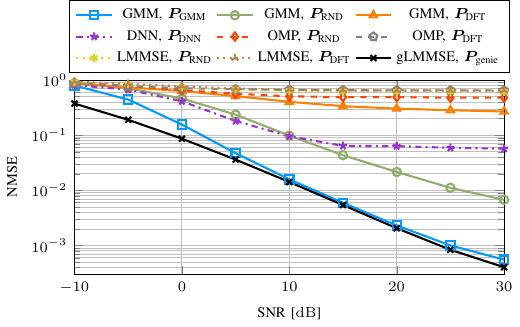}
    \vspace{-0.5cm}
    \caption{The NMSE over the \ac{SNR} for a MIMO system ($\Ntx=16$, $\Nrx=4$) with $B=7$ feedback bits and $n_\mathrm{p}=4$ pilots.}
    \label{fig:oversnr_16x4_np4_B7}
    \vspace{-0.4cm}
\end{figure}

\begin{figure}[tb]
    \centering
    \includegraphics[]{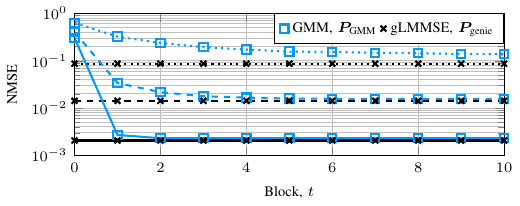}
    \vspace{-0.5cm}
    \caption{The NMSE over the block index $t$ for a MIMO system ($\Ntx=16$, $\Nrx=4$) with $B=7$ feedback bits, $n_\mathrm{p}=4$ pilots, for different $\text{SNR}$ levels (dotted: $\SI{0}{dB}$, dashed: $\SI{10}{dB}$, solid: $\SI{20}{dB}$).}
    \label{fig:overblocks_16x4_np4_B7}
    \vspace{-0.5cm}
\end{figure}

We use the set $\mathcal{H}$ [see \eqref{eq:H_dataset}] with $M=10^5$ samples for fitting the \ac{GMM} and all other data-aided baselines.
We use a different data set of $M_{\text{eval}}=10^4$ channel samples per block $t$ for evaluation purposes, where we set $T=10$.
The data samples are normalized to satisfy \( \expec[\|\mbh\|^2] = N = \Ntx \Nrx \). 
Additionally, we fix $\rho=1$, enabling the definition of the \ac{SNR} as \( \frac{1}{\sigma_n^2} \).
We employ the \ac{NMSE} as the performance measure.
Specifically, for every block $t$, we compute a corresponding channel estimate \( \hhat^{(m_\text{eval})} \)  for each test channel in the set \( \{ \mbh^{(m_\text{eval})} \}_{m_{\text{eval}}=1}^{M_{\text{eval}}} \), and calculate \( \text{NMSE} = \frac{1}{NM_{\text{eval}}} \sum_{m_\text{eval}=1}^{M_\text{eval}} \| \mbh^{(m_\text{eval})} - \hhat^{(m_\text{eval})} \|^2. \)
If not mentioned otherwise, we consider the block with index $t=5$ in the subsequent simulations.

\subsection{Point-to-point MIMO}

In \Cref{fig:oversnr_16x4_np4_B7}, we simulate a system with $\Ntx=16$ \ac{BS} antennas, $\Nrx=4$ \ac{MT} antennas, and $n_\mathrm{p}=4$ pilots. 
Since we have a \ac{MIMO} setup, we fit a Kronecker-structured \ac{GMM} with in total $K=2^7=128$ components ($B=7$ feedback bits), where $\Ktx = 32$ and $\Krx =4$.
The proposed \ac{GMM}-based pilot design scheme (see \Cref{sec:gmm_based_scheme}) denoted by ``GMM, $\mbP_\text{GMM}$'' outperforms all baselines ``\{GMM, OMP, LMMSE\}, \{$\mbP_\text{DFT}$, $\mbP_\text{RND}$\}'' by a large margin, where the \ac{GMM} estimator, the \ac{OMP}-based estimator, or the \ac{LMMSE} estimator, are used in combination with either \ac{DFT}-based pilot matrices or random pilot matrices.
The proposed scheme also outperforms the \ac{DNN} based approach denoted by ``DNN, $\mbP_\text{DNN}$,'' which jointly learns the estimator and a global pilot matrix for the whole scenario; thus, it cannot provide an \ac{MT} adaptive pilot matrix.
This highlights the advantage of the proposed model-based technique over the end-to-end learning technique, which is even trained for each \ac{SNR} level and pilot configuration.
Furthermore, we can observe that the \ac{GMM}-based pilot design scheme performs only slightly worse than the baseline with perfect statistical information at the \ac{BS} and \ac{MT} [see \eqref{eq:genie_lmmse} and \eqref{eq:p_genie}], denoted by  ``gLMMSE, $\mbP_\text{genie}$,'' being a utopian estimation approach.
We can observe a larger gap in the low \ac{SNR} regime, where the feedback information obtained through the responsibilities of a given observation [see \eqref{eq:ecsi_index_j}] is less accurate due to high noise.

In \Cref{fig:overblocks_16x4_np4_B7}, we analyze the performance of the proposed \ac{GMM}-based pilot design scheme over the block index $t$ for the same setup as before at three different \ac{SNR} levels, i.e., $\text{SNR} \in \{\SI{0}{dB}, \SI{10}{dB}, \SI{20}{dB}\}$.
As discussed in \Cref{sec:pilot_design_gmm}, at $t=0$, we utilize \ac{DFT}-based pilots.
At $t=1$, we can already see a significant gain in performance of the proposed \ac{GMM}-based pilot design scheme due to the feedback of the index.
The results further reveal that with an increasing \ac{SNR}, fewer blocks are required to achieve a performance close to the utopian baseline ``gLMMSE, $\mbP_\text{genie}$'' which requires perfect statistical knowledge at the \ac{BS} and the \ac{MT}.

\begin{figure}[tb]
    \centering
    \includegraphics[]{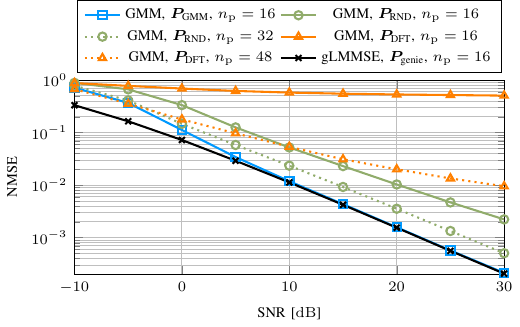}
    \vspace{-0.5cm}
    \caption{The NMSE over the \ac{SNR} for a MISO system ($\Ntx=64$, $\Nrx=1$) with $B=6$ feedback bits and $n_\mathrm{p} \in \{16,32,48\}$ pilots.}
    \label{fig:oversnr_64_npX}
    \vspace{-0.6cm}
\end{figure}

In the remainder, we consider a \ac{MISO} system with $\Ntx=64$ \ac{BS} antennas and $\Nrx=1$ antenna at the \ac{MT}.
In \Cref{fig:oversnr_64_npX} we utilize a \ac{GMM} with $K=2^6=64$ components ($B=6$ feedback bits) and consider setups with $n_\mathrm{p} \in \{16,32,48\}$ pilots.
In this case, the proposed scheme ``GMM, $\mbP_\text{GMM}$, $n_\mathrm{p}=16$,'' performs only slightly worse than the genie-aided approach ``gLMMSE, $\mbP_\text{genie}$.'' 
Moreover, the \ac{GMM}-based pilot design scheme outperforms the baselines ``GMM, \{$\mbP_\text{DFT}$, $\mbP_\text{RND}$\}, $n_\mathrm{p} \in \{16,32,48\}$.''
In particular, the proposed scheme with only $n_\mathrm{p}=16$ pilots outperforms random pilot matrices with twice as much pilots ($n_\mathrm{p}=32$) or in case of \ac{DFT}-based pilots even thrice as much pilots ($n_\mathrm{p}=48$).

Lastly, in \Cref{fig:overcomps_64_np16}, we analyze the effect of varying the number of \ac{GMM}-components $K$, on the scheme's performance, where we set $n_\mathrm{p}=16$ and consider three different \ac{SNR} levels, i.e., $\text{SNR} \in \{\SI{0}{dB}, \SI{10}{dB}, \SI{20}{dB}\}$.
We observe that the estimation error decreases as the number of components $K$ increases.
Moreover, as the \ac{SNR} increases, the gap to the genie-aided approach ``gLMMSE, $\mbP_\text{genie}$'' narrows.
Above $K=32$ components, a saturation can be observed.
These results suggest that varying the number of \ac{GMM} components $K$, allows for a performance-to-complexity trade-off without sacrificing too much in performance for $K\geq16$ components.

\begin{figure}[tb]
    \centering
    \includegraphics[]{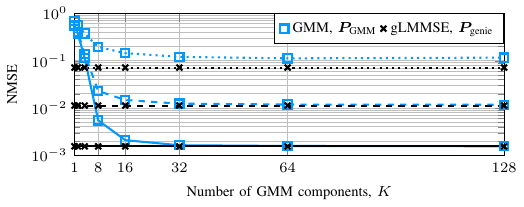}
    \vspace{-0.5cm}
    \caption{The NMSE over the number of components $K=2^B$ for a MISO system ($\Ntx=64$, $\Nrx=1$) with $n_\mathrm{p} = 16$ pilots, for different $\text{SNR}$ levels (dotted: $\SI{0}{dB}$, dashed: $\SI{10}{dB}$, solid: $\SI{20}{dB}$).}
    \label{fig:overcomps_64_np16}
    \vspace{-0.4cm}
\end{figure}

\subsection{Multi-user MIMO} \label{sec:sim_results_mumimo}

In the following multi-user simulations, the performance is measured by the \ac{NMSE} averaged over the number of \acp{MT} $J$ and $M_{\text{con}}=500$ constellations, with $J$ \acp{MT} randomly drawn from our evaluation set for each constellation.
If not mentioned otherwise, we consider the block with index $t=5$ in the simulations, do not specify a maximum number of iterations $L_{\max}$ for the pilot optimization algorithm (\Cref{alg:iterpdesign}) from \Cref{sec:mimimo_opt_perfect_stat}, and initialize with a random pilot matrix.

In \Cref{fig:MU_fig_oversnr_32x4_np8_B7}, we consider a \ac{MU-MIMO} system, with $J=4$ \acp{MT}, $\Ntx=32$ \ac{BS} antennas, $\Nrx=4$ \ac{MT} antennas, and $n_\mathrm{p}=8$ pilots. 
Due to the \ac{MIMO} setup, we fit a Kronecker-structured \ac{GMM} with in total $K=2^7=128$ components ($B=7$ feedback bits per \ac{MT}), where $\Ktx = 32$ and $\Krx =4$.
The proposed \ac{GMM}-based pilot design approach ``GMM, $\mbP_\text{GMM}$'' (see \Cref{sec:pilot_design_gmm_MU}) outperforms the baselines ``GMM, \{$\mbP_\text{DFT}$, $\mbP_\text{RND}$\}'' by a large margin, where the \ac{GMM} estimator is used in combination with either random pilot matrices or a random selection of columns of the two times oversampled \ac{DFT} matrix as the pilot matrix.
The baselines ``gLMMSE, $\{\mbP_\text{DFT}, \mbP_\text{RND}\}$'' assume that each \ac{MT} has perfect statistical knowledge of its channel to apply the utopian genie \ac{LMMSE} approach \eqref{eq:genie_lmmse} for channel estimation in combination with random or \ac{DFT}-based pilots as described above.
The \ac{GMM}-based scheme outperforms the baseline ``gLMMSE, $\mbP_\text{DFT}$'' over the whole \ac{SNR} range, and the baseline ``gLMMSE, $\mbP_\text{RND}$'' for \ac{SNR} values larger than $\SI{0}{dB}$. 
In addition to the assumption of perfect statistical knowledge at each \ac{MT} about its own channel, the baseline ``gLMMSE, $\mbP_\text{genie}$'' further assumes perfect statistical knowledge of the channels of all \acp{MT} at the \ac{BS} in order to design the pilot matrix using \Cref{alg:iterpdesign} from \Cref{sec:mimimo_opt_perfect_stat}.
The \ac{GMM}-based scheme performs only slightly worse than this utopian baseline.
A larger gap is present in the low \ac{SNR} regime, where the feedback information of each \ac{MT} obtained through the responsibilities [see \eqref{eq:ecsi_index_j_mu}] is less accurate due to high noise.

\begin{figure}[tb]
    \centering
    \includegraphics[]{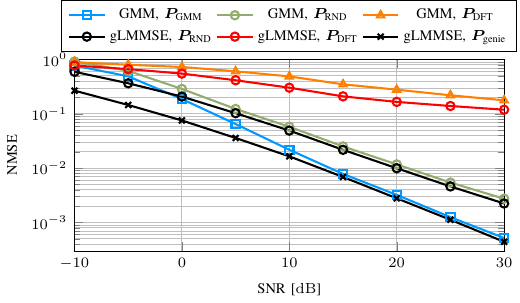}
    \vspace{-0.5cm}
    \caption{The NMSE over the \ac{SNR} for a MU-MIMO system ($\Ntx=32$, $\Nrx=4$) with $J=4$ \acp{MT}, $B=7$ feedback bits and $n_\mathrm{p}=8$ pilots.}
    \label{fig:MU_fig_oversnr_32x4_np8_B7}
    \vspace{-0.5cm}
\end{figure}

\begin{figure}[tb]
    \centering
    \includegraphics[]{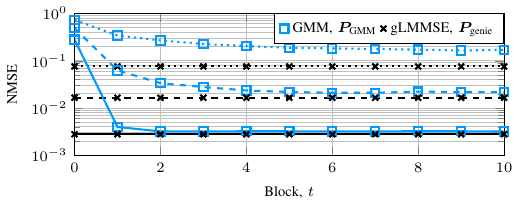}
    \vspace{-0.5cm}
    \caption{The NMSE over the block index $t$ for a MU-MIMO system ($\Ntx=32$, $\Nrx=4$) with $J=4$ \acp{MT}, $B=7$ feedback bits, $n_\mathrm{p}=8$ pilots, for different $\text{SNR}$ levels (dotted: $\SI{0}{dB}$, dashed: $\SI{10}{dB}$, solid: $\SI{20}{dB}$).}
    \label{fig:MU_fig_overblocks_32x4_np8_B7}
    \vspace{-0.5cm}
\end{figure}

In \Cref{fig:MU_fig_overblocks_32x4_np8_B7}, we analyze the performance of the \ac{GMM}-based pilot design scheme over the block index $t$ for the same \ac{MU-MIMO} setup as before at three different \ac{SNR} levels, i.e., $\text{SNR} \in \{\SI{0}{dB}, \SI{10}{dB}, \SI{20}{dB}\}$.
As discussed in \Cref{sec:pilot_design_gmm_MU}, at $t=0$, we utilize a random selection of columns of the two times oversampled \ac{DFT} matrix as the pilot matrix.
Similar to the single-user case (see \Cref{fig:overblocks_16x4_np4_B7}), after only one block, we can see a significant gain in performance of the proposed \ac{GMM}-based pilot design scheme due to the feedback of the index of each \ac{MT}.
Additionally, with an increasing \ac{SNR}, fewer blocks are required to achieve a performance close to the utopian baseline ``gLMMSE, $\mbP_\text{genie}$'' which requires perfect statistical knowledge at the \ac{BS} and the \acp{MT}.


In \Cref{fig:MU_fig_overmaxiter_32x4_np16}, for the same \ac{MU-MIMO} system with $n_\mathrm{p}=16$ pilots, the effect of using the iterative algorithm from \cite{JiMoCaNi15} following \Cref{col:lower_bound_opt} due to the established lower bound in \Cref{thm:lower_bound} instead of the iterative approach (\Cref{alg:iterpdesign}) outlined in \Cref{sec:mimimo_opt_perfect_stat}, subject to a maximum number of iterations $L_{\max}$, is analyzed.
Moreover, in addition to the initialization of the iterative algorithm with a random pilot matrix, we consider a random selection of columns of the two times oversampled \ac{DFT} matrix as an alternative initially chosen pilot matrix.
Accordingly, the curves labeled ``\{GMM, gLMMSE\}, $\mbP_\text{GMM}$, \{LB-RND, LB-DFT\}'' utilize the lower bound maximization, with either random or \ac{DFT}-based initialization using the \ac{GMM}-based feedback scheme, or perfect statistical knowledge at the \ac{BS} and the \acp{MT}, respectively.
With ``gLMMSE, $\mbP_\text{genie}$,'' we denote the baseline where the iterative algorithm from \Cref{sec:mimimo_opt_perfect_stat} is used in combination with random initialization and no maximum number of iterations $L_{\max}$ being specified.
\begin{figure}[tb]
    \centering
    \includegraphics[]{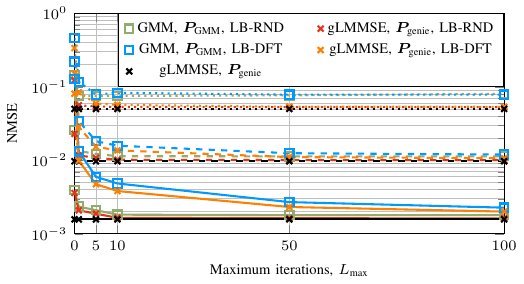}
    \vspace{-0.5cm}
    \caption{The NMSE over the number of maximum iterations $L_{\max}$ for a MU-MIMO system ($\Ntx=32$, $\Nrx=4$) with $J=4$ \acp{MT}, $B=7$ feedback bits, and $n_\mathrm{p} = 16$ pilots for different $\text{SNR}$ levels (dotted: $\SI{0}{dB}$, dashed: $\SI{10}{dB}$, solid: $\SI{20}{dB}$). }
    \label{fig:MU_fig_overmaxiter_32x4_np16}
    \vspace{-0.3cm}
\end{figure}
\begin{figure}[tb]
    \centering
    \includegraphics[]{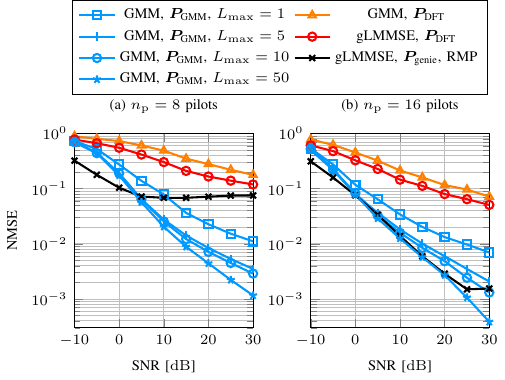}
    \vspace{-0.1cm}
    \caption{The NMSE over the \ac{SNR} for a MU-MIMO system ($\Ntx=32$, $\Nrx=4$) with $J=4$ \acp{MT}, $B=7$ feedback bits, and (a) $n_\mathrm{p}=8$ pilots, or (b) $n_\mathrm{p}=16$ pilots.}
    \label{fig:MU_fig_oversnr_32x4_np8_AND_np16_B7_withRMP}
    \vspace{-0.4cm}
\end{figure}
In \Cref{fig:MU_fig_overmaxiter_32x4_np16}, we can see that for low values for $L_{\max}$, a random initialization of the iterative lower-bound maximization is advantageous as compared to a \ac{DFT}-based initialization, irrespective of whether we use the \ac{GMM}-based scheme or require perfect statistical knowledge.
For a high \ac{SNR} value of $\SI{20}{dB}$ (solid), the \ac{DFT}-based initialization requires larger values for $L_{\max}$ to achieve a similar performance as the random initialization.
However, for the \ac{GMM}-based scheme, where an information feedforward from the \ac{BS} to the \acp{MT} is required, \ac{DFT}-based initialization is advantageous as it eliminates the need for seed exchange and a common random number generator between the \ac{BS} and \acp{MT}, enhancing practicability.
We can see further that for an \ac{SNR} of $\SI{0}{dB}$ (dotted), there is a gap present between the \ac{GMM}-based scheme compared to the case where perfect statistical knowledge is required since the feedback information of each \ac{MT} obtained through the responsibilities given an observation [see \eqref{eq:ecsi_index_j_mu}] is less accurate due to high noise.
We have observed this behavior for different numbers of pilots $n_\mathrm{p}$.
Overall, this analysis suggests that with the lower bound maximization using the iterative algorithm from \cite{JiMoCaNi15} following \Cref{col:lower_bound_opt}, together with a relatively low number of maximum iterations $L_{\max}$, a pilot matrix design of relatively low complexity is possible without sacrificing too much performance.
In the next simulations, we focus our analysis on \ac{DFT}-based initialization due to superior practicability compared to random initialization.

In \Cref{fig:MU_fig_oversnr_32x4_np8_AND_np16_B7_withRMP}(a) and \Cref{fig:MU_fig_oversnr_32x4_np8_AND_np16_B7_withRMP}(b), we again consider the \ac{MU-MIMO} setup from before, with $n_\mathrm{p}=8$ pilots or $n_\mathrm{p}=16$ pilots, respectively. 
The proposed \ac{GMM}-based pilot design approaches labeled ``GMM, $\mbP_\text{GMM}$, $L_{\max} \in$ \{$1,5,10,50$\}'' utilize the lower-bound maximization by using the iterative algorithm from \cite{JiMoCaNi15} following \Cref{col:lower_bound_opt} together with \ac{DFT}-based initialization with different maximum numbers of iterations $L_{\max}$.
We can observe that even one iteration ($L_{\max}=1$) is enough to outperform 
the baselines ``\{GMM, gLMMSE\}, $\mbP_\text{DFT}$,'' by a large margin.
In \Cref{fig:MU_fig_oversnr_32x4_np8_AND_np16_B7_withRMP}(a), for \ac{SNR} values larger than approximately $\SI{5}{dB}$ and even $L_{\max}=5$, the proposed scheme even outperforms ``gLMMSE, $\mbP_\text{genie}$, RMP,'' i.e., the state-of-the-art pilot matrix signaling scheme from \Cref{sec:conventional_signaling} with $B_\text{RMP}=7$ bits; there, each \ac{MT} has perfect statistical knowledge of its channel to apply the utopian genie \ac{LMMSE} approach from \eqref{eq:genie_lmmse} for channel estimation, and perfect statistical knowledge of the channels of all \acp{MT} at the \ac{BS} is assumed to design the pilot matrix using the iterative algorithm from \Cref{sec:mimimo_opt_perfect_stat} prior to quantization with the \ac{RMP} algorithm.
The error floor with the \ac{RMP}-based scheme was already similarly observed and discussed in \cite{BaXu17}.
In \Cref{fig:MU_fig_oversnr_32x4_np8_AND_np16_B7_withRMP}(b), with $n_\mathrm{p}=16$ pilots, the error floor of the \ac{RMP}-based scheme is present for \ac{SNR} values larger than $\SI{25}{dB}$.
The proposed \ac{GMM}-based scheme, which does not require any \emph{a priori} statistics, performs very well also in this setup.
Note that in these particular setups, the \ac{RMP}-based scheme requires in total $n_\mathrm{p} B_{\text{RMP}}$ feedforward bits, yielding $56$ bits for the setup in \Cref{fig:MU_fig_oversnr_32x4_np8_AND_np16_B7_withRMP}(a), and $112$ bits for the setup in \Cref{fig:MU_fig_oversnr_32x4_np8_AND_np16_B7_withRMP}(b), whereas the \ac{GMM}-based scheme does not depend on the number of pilots $n_\mathrm{p}$ and only requires in total $JB=28$ feedforward bits.

\begin{figure}[tb]
    \centering
    \includegraphics[]{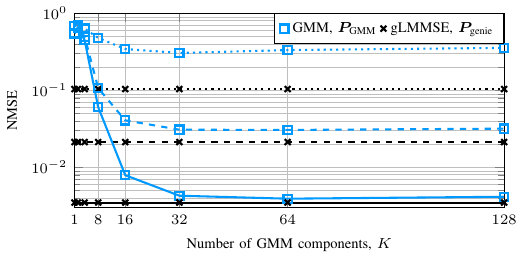}
    \vspace{-0.5cm}
    \caption{The NMSE over the number of components $K=2^B$ for a MU-MISO system ($\Ntx=64$, $\Nrx=1$) with $J=8$ \acp{MT}, $n_\mathrm{p} = 16$ pilots, for different $\text{SNR}$ levels (dotted: $\SI{0}{dB}$, dashed: $\SI{10}{dB}$, solid: $\SI{20}{dB}$).}
    \label{fig:MU_overcomps_64_np16}
    \vspace{-0.5cm}
\end{figure}

Lastly, in \Cref{fig:MU_overcomps_64_np16}, we analyze the effect of varying the number of \ac{GMM}-components $K$ on the scheme's performance in a \ac{MU-MISO} system with $J=8$ single-antenna \acp{MT} and $\Ntx=64$ \ac{BS} antennas.
We set $n_\mathrm{p}=16$ pilots and consider three different \ac{SNR} levels, i.e., $\text{SNR} \in \{\SI{0}{dB}, \SI{10}{dB}, \SI{20}{dB}\}$.
Since we consider single-antenna \acp{MT} in this setup, we apply the iterative algorithm from \cite{JiMoCaNi15}, utilize \ac{DFT}-based initialization, and set the maximum number of iterations $L_{\max}=50$, with the \ac{GMM}-based scheme.
For the ``gLMMSE, $\mbP_\text{genie}$'' baseline, random initialization is applied, and no maximum number of iterations $L_{\max}$ is specified.
Similar to the single-user case in \Cref{fig:overcomps_64_np16}, we can observe that the estimation error decreases with an increasing number of components $K$.
With an increasing \ac{SNR}, the gap to the genie-aided approach ``gLMMSE, $\mbP_\text{genie}$'' decreases.
Compared to the single-user case, slightly more \ac{GMM}-components are needed in a multi-user system.
An intuitive explanation is that with the \ac{GMM}-based scheme, the true but unknown covariance of each \ac{MT} is approximated, and the approximation error accumulates over multiple \acp{MT}. 
With more components, the approximation error decreases, yielding a better performance.
Overall, also in the multi-user case, these results suggest that a performance-to-complexity trade-off can be realized by varying the number of \ac{GMM} components $K$.

\section{Conclusion} \label{sec:conclusion}

In this work, we proposed a \ac{GMM}-based pilot design scheme for single- and multi-user \ac{MIMO} \ac{FDD} systems.
A significant advantage of the proposed scheme is that it does not require \emph{a priori} knowledge of the channel's statistics at the \ac{BS} and the \acp{MT}.
Instead, it relies on a feedback mechanism, establishing common knowledge of the pilot matrix in a single-user system.
In the multi-user case, the pilot design involves an additional feedforward of the indices of the \ac{GMM} components. 
The \ac{GMM}-based scheme offers significant versatility since it can generally be adapted to any desired \ac{SNR} level, pilot configuration, and different numbers of \acp{MT} without requiring re-training.
Simulation results show that the performance gains achieved with the proposed scheme allow the deployment of system setups with reduced pilot overhead while maintaining a similar estimation performance.
In future work, the extension of the \ac{GMM}-based pilot scheme to systems with a spatio-temporal correlation of channels \cite{NoZoSuLo14, BaXu17, BoBaRiUt23} can be investigated.

\appendix
\subsection{Proof of \Cref{thm:optimal_sequnces}}
\label{sec:proof_optimal_sequnces}

We start by rewriting the sum \ac{CMI} expression in \eqref{eq:sumcmi_maximization} by utilizing for all $j \in \mathcal{J}$, \( \mbC_{\mbdelta_j} = \mbC_{\mbdelta_j}^{\mathrm{tx}} \otimes \mbC_{\mbdelta_j}^{\mathrm{rx}} \), and the Kronecker product rules (see \cite{Sc04}) as
\begin{align}
    & \sum_{j=1}^J \log\det \left( \mathbf{I} + \tfrac{1}{\sigma_n^2} (\mbP_t \otimes \mathbf{I}_{\Nrx}) \mbC_{\mbdelta_j} (\mbP_t \otimes \mathbf{I}_{\Nrx})^\herm \right) \nonumber \\
    &=\sum_{j=1}^J \log\det \left( \mathbf{I} + \tfrac{1}{\sigma_n^2} (\mbP_t \mbC_{\mbdelta_j}^{\mathrm{tx}} \mbP_t^\herm)  \otimes \mbC_{\mbdelta_j}^{\mathrm{rx}} \right).  \label{eq:reform_scmi}
\end{align}

The partial derivative of the sum \ac{CMI} with respect to
the entry of the conjugate of the pilot matrix $[\mbP_t^*]_{r,c}$ at row $r$ and column $c$ is then given by

\begin{align}
    &\dfrac{\partial}{\partial[\mbP_t^*]_{r,c}} \left(\sum_{j=1}^J \log\det \left( \mathbf{I} + \tfrac{1}{\sigma_n^2} (\mbP_t \mbC_{\mbdelta_j}^{\mathrm{tx}} \mbP_t^\herm)  \otimes \mbC_{\mbdelta_j}^{\mathrm{rx}} \right) \right) \nonumber \\
    &= \sum_{j=1}^J \tr\left( \left(\mathbf{I} + \tfrac{1}{\sigma_n^2} (\mbP_t \mbC_{\mbdelta_j}^{\mathrm{tx}} \mbP_t^\herm)  \otimes \mbC_{\mbdelta_j}^{\mathrm{rx}} \right)\inv \right. \nonumber \label{eq:linearity_sum_der} \\
    &\qquad \qquad \quad \times \left. \dfrac{\partial}{\partial[\mbP_t^*]_{r,c}} \left( \mathbf{I} + \tfrac{1}{\sigma_n^2} (\mbP_t \mbC_{\mbdelta_j}^{\mathrm{tx}} \mbP_t^\herm)  \otimes \mbC_{\mbdelta_j}^{\mathrm{rx}} \right) \right) \\
    &= \sum_{j=1}^J \tr \bigg( \left(\mathbf{I} + \tfrac{1}{\sigma_n^2} (\mbP_t \mbC_{\mbdelta_j}^{\mathrm{tx}} \mbP_t^\herm)  \otimes \mbC_{\mbdelta_j}^{\mathrm{rx}} \right)\inv \nonumber \\
    &\qquad \qquad \quad \times \left( \tfrac{1}{\sigma_n^2} (\mbP_t \mbC_{\mbdelta_j}^{\mathrm{tx}} \mathbf{e}_c \mathbf{e}_r^\tp)  \otimes \mbC_{\mbdelta_j}^{\mathrm{rx}} \right) \bigg), \label{eq:eldev_thm}
\end{align}
where \eqref{eq:linearity_sum_der} follows from the linearity of the sum, and using the derivation rule provided in \cite[eq. (46)]{PePe08}.
By utilizing the \acp{SVD}
\begin{align}
    & \mbP_t \mbC_{\mbdelta_j}^{\mathrm{tx}} \mbP_t^\herm = \mbV_j \mbS_j \mbV_j^\herm, \label{eq:svd_covtx} \\
    & \mbC_{\mbdelta_j}^{\mathrm{rx}}=\mbW_j \mbT_j \mbW_j^\herm, \label{eq:svd_covrx}
\end{align}
using $\mathbf{I} = (\mbV_j \otimes \mbW_j) (\mbV_j \otimes \mbW_j)^\herm $, and applying basic Kronecker product rules, we have that
\begin{align}
    &\left(\mathbf{I} + \tfrac{1}{\sigma_n^2} (\mbP_t \mbC_{\mbdelta_j}^{\mathrm{tx}} \mbP_t^\herm)  \otimes \mbC_{\mbdelta_j}^{\mathrm{rx}} \right)\inv \nonumber \\
    &= (\mbV_j \otimes \mbW_j) (\mathbf{I} + \tfrac{1}{\sigma_n^2} \mbS_j \otimes \mbT_j)\inv (\mbV_j \otimes \mbW_j)^\herm. \label{eq:svd_inv_diag}
\end{align}
Note that $\mbS_j \otimes \mbT_j$ is diagonal.
Let $\sum_{i=1}^{I} \alpha_{j,i} \mbgamma_{j,i} \mbbeta_{j,i}^\tp$ with $I = \min(n_\mathrm{p}, \Nrx)$, be the \ac{SVD} of the matrix $ \uvect_{\Nrx, n_\mathrm{p}} \left( \diag \left( (\mathbf{I} + \tfrac{1}{\sigma_n^2} \mbS_j \otimes \mbT_j )\inv \right) \right)$ for all $j \in \mathcal{J}$.
Then, it holds for all $j \in \mathcal{J}$,
\begin{align}
    & (\mathbf{I} + \tfrac{1}{\sigma_n^2} \mbS_j \otimes \mbT_j )\inv \nonumber \\
    & =\sum_{i=1}^I \diag(\alpha_{j,i} \mbbeta_{j,i} \otimes \mbgamma_{j,i}) = \sum_{i=1}^I \mbD_{\mbbeta_{j,i}} \otimes \mbD_{\mbgamma_{j,i}}, \label{eq:kpsvd_thm}
\end{align}
which are real-valued Kronecker product \acp{SVD}, see \cite{Lo00}, with the integer factorization of $n_\mathrm{p}$ and $\Nrx$.
Accordingly, the diagonal matrices $\mbD_{\mbbeta_{j,i}}\in \mathbb{R}^{n_\mathrm{p} \times n_\mathrm{p}}$ and $\mbD_{\mbgamma_{j,i}} \in \mathbb{R}^{\Nrx \times \Nrx}$ for all $j \in \mathcal{J}$, are defined as $\mbD_{\mbbeta_{j,i}} = \alpha_{j,i} \diag(\mbbeta_{j,i})$ and $\mbD_{\mbgamma_{j,i}} = \diag(\mbgamma_{j,i})$.
Note that the specific dimensions (the integer factorization of $n_\mathrm{p}$ and $\Nrx$) are required to proceed with simplifications involving Kronecker products.
Incorporating \eqref{eq:svd_inv_diag} and \eqref{eq:kpsvd_thm} into \eqref{eq:eldev_thm}, we obtain
\begin{align}
    &\sum_{j=1}^J \tr \Bigg( (\mbV_j \otimes \mbW_j) \left(\sum_{i=1}^I \mbD_{\mbbeta_{j,i}} \otimes \mbD_{\mbgamma_{j,i}}\right) (\mbV_j \otimes \mbW_j)^\herm \nonumber \\
    &\qquad \qquad \quad \times \left( \tfrac{1}{\sigma_n^2} (\mbP_t \mbC_{\mbdelta_j}^{\mathrm{tx}} \mathbf{e}_c \mathbf{e}_r^\tp)  \otimes \mbC_{\mbdelta_j}^{\mathrm{rx}} \right) \Bigg),   
\end{align}
which can be further reformulated by exploiting \eqref{eq:svd_covrx}, basic Kronecker product and trace rules to
\begin{align}
    &\sum_{j=1}^J \tr \Bigg( \left(\sum_{i=1}^I \mbD_{\mbbeta_{j,i}} \otimes \mbD_{\mbgamma_{j,i}}\right)  \nonumber \\
    &\qquad \qquad \quad \times  \left( \tfrac{1}{\sigma_n^2} (\mbV_j^\herm \mbP_t \mbC_{\mbdelta_j}^{\mathrm{tx}} \mathbf{e}_c \mathbf{e}_r^\tp \mbV_j)  \otimes \mbT_j \right) \Bigg) \nonumber \\
    &=\sum_{j=1}^J \tr\left( \sum_{i=1}^I \left( \tfrac{1}{\sigma_n^2} (\mbD_{\mbbeta_{j,i}}\mbV_j^\herm \mbP_t \mbC_{\mbdelta_j}^{\mathrm{tx}} \mathbf{e}_c \mathbf{e}_r^\tp \mbV_j)  \otimes (\mbD_{\mbgamma_{j,i}} \mbT_j) \right) \right)\\
    &=\sum_{j=1}^J  \sum_{i=1}^I \tr \left( \tfrac{1}{\sigma_n^2} \mbD_{\mbbeta_{j,i}}\mbV_j^\herm \mbP_t \mbC_{\mbdelta_j}^{\mathrm{tx}} \mathbf{e}_c \mathbf{e}_r^\tp \mbV_j \right) \tr (\mbD_{\mbgamma_{j,i}} \mbT_j)\\
    &=\sum_{j=1}^J  \sum_{i=1}^I \tfrac{1}{\sigma_n^2} \mathbf{e}_r^\tp \mbV_j \mbD_{\mbbeta_{j,i}}\mbV_j^\herm \mbP_t \mbC_{\mbdelta_j}^{\mathrm{tx}} \mathbf{e}_c \tr (\mbD_{\mbgamma_{j,i}} \mbT_j).
\end{align}
Accordingly, it holds:
\begin{align}
    &\dfrac{\partial}{\partial\mbP_t^*} \left(\sum_{j=1}^J \log\det \left( \mathbf{I} + \tfrac{1}{\sigma_n^2} (\mbP_t \mbC_{\mbdelta_j}^{\mathrm{tx}} \mbP_t^\herm)  \otimes \mbC_{\mbdelta_j}^{\mathrm{rx}} \right) \right) \nonumber \\
    &= \sum_{j=1}^J  \sum_{i=1}^I \tfrac{1}{\sigma_n^2} \mbV_j \mbD_{\mbbeta_{j,i}}\mbV_j^\herm \mbP_t \mbC_{\mbdelta_j}^{\mathrm{tx}} \tr (\mbD_{\mbgamma_{j,i}} \mbT_j).
\end{align}
It remains to set the derivative of the dual function of the optimization problem \eqref{eq:sumcmi_maximization} with respect to $\mbP_t^\star$ to zero to observe the first-order necessary \ac{KKT} condition provided in~\eqref{eq:sum_cmi_mumimo}, cf., e.g., \cite{BoVa04}.

\subsection{Proof of \Cref{thm:lower_bound}}
\label{sec:proof_lower_bound}

Given the reformulated sum \ac{CMI} expression from \eqref{eq:reform_scmi} and by using \eqref{eq:svd_covtx} and \eqref{eq:svd_covrx}, the sum \ac{CMI} can be expressed as
\begin{align}
    &\sum_{j=1}^J \log\det \left( \mathbf{I} + \tfrac{1}{\sigma_n^2} \mbS_j \otimes \mbT_j \right) \nonumber \\
    &=\sum_{j=1}^J \log\det \left( \mathbf{I} + \tfrac{1}{\sigma_n^2} \tr(\mbT_j) \mbS_j \otimes \tfrac{1}{\tr(\mbT_j)}\mbT_j \right). \label{eq:reform_scmi_tr}
\end{align}
By further defining for all $j \in \mathcal{J}$, $\tfrac{1}{\tr(\mbT_j)}\mbT_j = \sum_{i=1}^{\Nrx} t_{j,i} \diag(\mathbf{e}_i)$ with $\sum_{i=1}^{\Nrx} t_{j,i} = 1$, we can rewrite the expression provided in \eqref{eq:reform_scmi_tr} and lower bound it using the concavity of the log-determinant expression (see \cite[Th. 7.6.6]{HoJo85}) as
\begin{align}
    &\sum_{j=1}^J \log\det \left( \mathbf{I} + \tfrac{1}{\sigma_n^2} \tr(\mbT_j) \mbS_j \otimes \left(  \sum_i^{\Nrx} t_{j,i} \diag(\mathbf{e}_i) \right) \right) \nonumber \\
    &= \sum_{j=1}^J \log\det \sum_i^{\Nrx} t_{j,i} \left( \mathbf{I} + \tfrac{1}{\sigma_n^2} \tr(\mbT_j) \mbS_j \otimes  \diag(\mathbf{e}_i) \right) \label{eq:lb_ineq_left}\\
    &\geq \sum_{j=1}^J \sum_{i=1}^{\Nrx} t_{j,i} \log\det\left( \mathbf{I} + \tfrac{1}{\sigma_n^2} \tr(\mbT_j) \mbS_j \otimes  \diag(\mathbf{e}_i) \right) \label{eq:lb_ineq_right}\\
    &= \sum_{j=1}^J \sum_{i=1}^{\Nrx} t_{j,i} \log\det\left( \mathbf{I} + \tfrac{1}{\sigma_n^2} \tr(\mbT_j) \mbS_j \right) \\
    &= \sum_{j=1}^J \log\det\left( \mathbf{I} + \tfrac{1}{\sigma_n^2} \tr(\mbT_j) \mbS_j \right). \label{eq:lowerbound_end}
\end{align}
Since $\tr(\mbC_{\mbdelta_j}^{\mathrm{rx}}) = \tr(\mbT_j)$, see \eqref{eq:svd_covrx}, and by incorporating \eqref{eq:svd_covtx} in \eqref{eq:lower_bound}, and using $\mathbf{I} = \mbV_j \mbV_j^\herm $, yields \eqref{eq:lowerbound_end}, finishing the proof.

\subsection{Proof of \Cref{prop:lower_bound_equal}}
\label{sec:proof_lower_bound_equal}

Observe that the inequality in \eqref{eq:lb_ineq_right} becomes an equality if for all $j \in \mathcal{J}$ and any ${i^\prime}, t_{j,{i^\prime}} = 1 \wedge t_{j,i} = 0$ for all $i \neq {i^\prime}$, i.e., $\rk(\mbC_{\mbdelta_j}^{\mathrm{rx}}) = 1$ for all $j \in \mathcal{J}$.
In any other case, if there is at least one \ac{MT} $j$ for which $\rk(\mbC_{\mbdelta_j}^{\mathrm{rx}}) > 1$ holds with accordingly $t_{j,i}\in (0,1)$ for all $i$, and since then
\begin{equation} \label{eq:ineq_cond_matrices}
        \mathbf{I} + \tfrac{1}{\sigma_n^2} \tr(\mbT_j) \mbS_j \otimes  \diag(\mathbf{e}_i) \neq \mathbf{I} + \tfrac{1}{\sigma_n^2} \tr(\mbT_j) \mbS_j \otimes  \diag(\mathbf{e}_{i^\prime})
\end{equation}
due to the structure of the resulting matrices for $i \neq i^\prime$, the condition for equality between \eqref{eq:lb_ineq_left} and \eqref{eq:lb_ineq_right} according to \cite[Th. 7.6.6]{HoJo85} is not met (equality in \eqref{eq:ineq_cond_matrices} is required).

\subsection{Proof of \Cref{prop:lower_bound_largestgap}}
\label{sec:proof_lower_bound_largestgap}

The lower bound expression from \eqref{eq:lower_bound} only depends on the trace of each \acp{MT} receive-side covariance matrix $\mbC_{\mbdelta_j}^{\mathrm{rx}}$ for all $j \in \mathcal{J}$.
We outline next that the sum \ac{CMI} attains its largest value and consequently exhibits the largest gap to the lower bound for the case of no spatial correlation for all $j \in \mathcal{J}$.
Thus, given the reformulated sum \ac{CMI} from \eqref{eq:reform_scmi_tr}, we have to show
\begin{align}
    &\sum_{j=1}^J \log\det \left( \mathbf{I} + \tfrac{1}{\sigma_n^2} \tau_j \mbS_j \otimes \tfrac{1}{\Nrx} \mathbf{I} \right) \nonumber \\
    & \stackrel{!}> \sum_{j=1}^J \log\det \left( \mathbf{I} + \tfrac{1}{\sigma_n^2} \tau_j \mbS_j \otimes \tfrac{1}{\Nrx} (\mathbf{I} + \diag(\mbkappa_j)) \right),
\end{align}
where for all $j \in \mathcal{J}$, the elements of $\mbkappa_j$ satisfy $\sum_{\ell=1}^{\Nrx} \kappa_{j,\ell} = 0$, with $\kappa_{j,\ell} + 1 \geq0$, and $\mbkappa_j \neq \mathbf{0}$ (in this way, for all $j \in \mathcal{J}$, the eigenvalues of any valid receive-side covariance matrix, except the weighted identity, are parametrized).
Since the involved matrices are diagonal, we have
\begin{align}
    &\sum_{j=1}^J \log \prod_{i=1}^{n_\mathrm{p}} (1+\tfrac{\tau_j}{\sigma_n^2 \Nrx} s_{j,i})^{\Nrx} \nonumber \\
    & \stackrel{!}> \sum_{j=1}^J \log  \prod_{i=1}^{n_\mathrm{p}} \prod_{\ell=1}^{\Nrx}(1+\tfrac{\tau_j}{\sigma_n^2 \Nrx} s_{j,i} (1+\kappa_{j,\ell})),
\end{align}
which can be further rewritten to
\begin{align}
    &\sum_{j=1}^J \sum_{i=1}^{n_\mathrm{p}} \log (1+\tfrac{\tau_j}{\sigma_n^2 \Nrx} s_{j,i})^{\Nrx} \nonumber \\
    & \stackrel{!}> \sum_{j=1}^J  \sum_{i=1}^{n_\mathrm{p}} \sum_{\ell=1}^{\Nrx} \log(1+\tfrac{\tau_j}{\sigma_n^2 \Nrx} s_{j,i} (1+\kappa_{j,\ell})).
\end{align}
For a fixed $j$ and $i$, we thus have to show 
\begin{align}
    \Nrx \log (1+\tfrac{\tau_j}{\sigma_n^2 \Nrx} s_{j,i}) \stackrel{!}> \sum_{\ell=1}^{\Nrx} \log(1+\tfrac{\tau_j}{\sigma_n^2 \Nrx} s_{j,i} (1+\kappa_{j,\ell})),
\end{align}
which can be reformulated by dividing both sides by $\Nrx$ as
\begin{align}
    &\log \left(\sum_{\ell=1}^{\Nrx}\tfrac{1}{\Nrx}(1+\tfrac{\tau_j}{\sigma_n^2 \Nrx} s_{j,i})\right) \label{eq:lbmaxgap_before} \nonumber \\
    & \stackrel{!}> \sum_{\ell=1}^{\Nrx}  \tfrac{1}{\Nrx} \log(1+\tfrac{\tau_j}{\sigma_n^2 \Nrx} s_{j,i} (1+\kappa_{j,\ell})).
\end{align}
Observing that the expression provided in \eqref{eq:lbmaxgap_end} is equal to \eqref{eq:lbmaxgap_before} and the following inequality is true due to Jensen's inequality
\begin{align}
    &\log \left(\sum_{\ell=1}^{\Nrx}\tfrac{1}{\Nrx}(1+\tfrac{\tau_j}{\sigma_n^2 \Nrx} s_{j,i} (1+\kappa_{j,\ell}))\right) \label{eq:lbmaxgap_end} \nonumber \\
    & > \sum_{\ell=1}^{\Nrx}  \tfrac{1}{\Nrx} \log(1+\tfrac{\tau_j}{\sigma_n^2 \Nrx} s_{j,i} (1+\kappa_{j,\ell})),
\end{align}
and is strict since $\mbkappa_j \neq \mathbf{0}$, the only condition for which equality can hold is not satisfied, i.e. $1+\tfrac{\tau_j}{\sigma_n^2 \Nrx} s_{j,i} (1+\kappa_{j,\ell}) \neq 1+\tfrac{\tau_j}{\sigma_n^2 \Nrx} s_{j,i} (1+\kappa_{j,{\ell^\prime}})$ for $\ell \neq \ell^{\prime}$. 

\balance
\bibliographystyle{IEEEtran}
\bibliography{IEEEabrv,biblio}
\end{document}